%% file: main.tex
\documentclass{article}
\input{admin/preamble2}

\input{admin/macros}
\usepackage{graphicx} % Required for inserting images
\usepackage{url}
\usepackage{float}

\usepackage{authblk}

\usepackage[utf8]{inputenc}
\usepackage{graphicx}
\usepackage{parskip}
\usepackage{gensymb}
\restylefloat{table}

\usepackage[dvipsnames]{xcolor}
\restylefloat{table}
\usepackage{caption}
\usepackage{subcaption}
\usepackage{array}
\usepackage{multirow}

\usepackage{nameref}
\usepackage{bbm}

\usepackage{mathtools}
\usepackage{dsfont}
\usepackage{braket}
\usepackage{amsfonts}
\usepackage{amssymb}
\usepackage{CJKutf8}
\usepackage{multicol}
\numberwithin{equation}{section}

\title{Time evolution of impurity models and their universality for quantum computation}
\author[1, 2]{N. C. Mai Pham}
\author[1]{Raul A. Santos}
\date{\today}
\affil[1]{Phasecraft Ltd.}
\affil[2]{University College London}

\begin{document}

\maketitle

\begin{abstract}

Impurity Hamiltonians are systems of $N$ fermionic modes where $\mathcal{O}(1)$ of them interact among themselves via quartic (or higher order) fermion terms, while coupling quadratically with $\mathcal{O}(N)$ bath modes. Without the quartic interactions, these systems are classically simulable with $\mathcal{O}(N^3)$ resources. In \cite{brod2012geometriesmatchgates} it was proved that the time-dependent evolution of these systems can perform universal quantum computation. The question of whether or not this remains true for time-independent evolution remains open. Here we prove that the time evolution of generic time-independent impurity Hamiltonians on $\mathcal{O}(N)$ qubits is universal on $N$ qubits if the input state is a product state of fermions in any single particle basis. In our proof we find that for a computation of depth $S$, the size of the impurity scales as $\mathcal{O}(S\log S)$. 

%Quantum impurity model, is used as the simpler problem to study correlated material via Dynamical Mean Field Theoy (DMFT). While the time evolution of such time-dependent Hamiltonian is proven to be able to perform universal computation, it’s hard to realise its physical implementation. In this work, given a desired calculation and the time-dependent Hamiltonian that achieve it, we have a procedure to construct a time-independent Hamiltonian on a larger space. We prove that by preparing the initial state to be a product state in a particle basis state, using just a change of basis, allowing the system to evolve with the previously defined time-independent Hamiltonian and considering a subspace of the final system, we will achieve the same universal computational power as that of the time-dependent one. 
    
\end{abstract}

%\tableofcontents

\section{Introduction}

It has been known for some time that the dynamics of a system can be used to perform computational tasks, and even simple systems can perform (reversible) universal \textit{classical} computation \cite{fredkin1982conservative}. 

It is believed that quantum mechanics provides a different computational model \cite{bernstein1993quantum} and as such the connection between time dynamics and quantum computation has also been explored. In particular, it has been shown that multi-particle quantum walks or scattering of momentum states in the Fermi-Hubbard model can perform universal quantum computation \cite{Childs_2013_QuantumWalk, Bao_2015}, a result that can be considered the quantum version of \cite{fredkin1982conservative}. In these systems, particles move in a lattice and can experience interactions on all sites.

On the other hand, the time evolution of restricted systems like free-fermions over $N$ modes can only perform universal quantum computation over $\log(N)$ modes, signalling the crucial role of interactions. A natural question follows: How many interactions are actually needed for the dynamics of a system to perform universal quantum computation? In \cite{brod_childs2013computational}, a partial answer to this question was given, within the framework of the circuit model. There, the authors analysed the power of circuits based on matchgates in geometries not a cycle or not strictly linear, and proved that these circuits are indeed universal for quantum computation. Doing a Jordan-Wigner transformation \cite{jordanwigner1993} to map qubits to fermionic modes, the circuit model studied in \cite{brod_childs2013computational} can be understood as the evolution under a time-dependent Hamiltonian of an \textit{impurity} model, i.e. a fermionic model with interactions (four fermion terms) acting on $\mathcal{O}(1)$ modes.

Impurity models have a rich history. First formulated in \cite{anderson1961localized}, they were used to understand the anomalous resistance behaviour with temperature of some metals, which ultimately led to the discovery of the Kondo effect \cite{kondo1964resistance}. Nowadays, they are used to connect calculations of model systems with real materials through dynamical mean field theory (DMFT) \cite{kotliar2006electronic_dmft}, where the dynamical response of the impurity model is used self-consistently to approximate the Green's function of a material. It is still an open question \cite{Bravyi_2017_Impurity} if the time evolution of a \textit{time-independent} system with $\mathcal{O}(1)$ interactions can perform universal computation.

In this work we give a partial answer to this question, in the form of the following theorem

\begin{theorem}[Time evolution of impurity models and universal quantum computation]\label{thm:1}
The evolution with the Hamiltonian
\begin{equation} 
    H_{\rm ind} \coloneq \jindex H_j + \indexM{m,r} \left (\gm \cN{r}{m} + {\gm}^* \cNdag{r}{m} \right ) \left ( 1 - {2} \indexM{p} \cNN{p}{p}\right ) 
\end{equation}
with
\begin{equation}
    H_j \coloneq \indexM{r} r \cjcj + \indexM{m,r} \left ( \fmj \cj{r}{m} + {\fmj}^* \cjdag{r}{m} \right )
\end{equation}
acting on $NM$ fermionic modes $c_{i,j}$ is universal on $\Omega(N)$ qubits. For a computation of depth $S$, the size of the impurity $M= \mathcal{O}(S\ln{S})$.
\end{theorem}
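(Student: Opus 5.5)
We read the Hamiltonian as a clock construction. The index $r=1,\dots,M$ labels time steps and $j=1,\dots,N$ labels logical fermionic modes. The term $\sum_r r\, c^\dagger_{r,j}c_{r,j}$ inside $H_j$ gives mode copy $r$ an energy $r$. The couplings $f_{m,j}$ and $g_m$ hop between copies $r$ and $m$. The plan is to work in a rotating frame with respect to $H_0=\sum_{j,r} r\, c^\dagger_{r,j}c_{r,j}$. In that frame a hopping term $c^\dagger_{m,j}c_{r,j}$ picks up a phase $e^{i(m-r)t}$. Choosing $f_{m,j}$ and $g_m$ as the amplitudes of suitable Fourier components, with a large scale separation, lets us engineer an effective time-dependent Hamiltonian by a rotating-wave / Magnus argument. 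That effective Hamiltonian should reproduce the time-dependent impurity (matchgate-on-nonlinear-geometry) circuits of \cite{brod2012geometriesmatchgates,brod_childs2013computational}, which are known to be universal.

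\textbf{Step 1: decouple the quadratic part.} We diagonalise each $H_j$, which is quadratic on the $M$ modes $c_{\cdot,j}$, by a single-particle change of basis. This is exactly the freedom allowed by the hypothesis that the input is a fermionic product state in some single-particle basis. The free evolution then acts on each logical wire $j$ as a single-particle unitary that is trivially tracked.

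\textbf{Step 2: interpret the interaction.} The factor $(1-2\sum_p c^\dagger_p c_p)$ is the parity-like operator $(-1)^{n}$ restricted to the one-particle sector of the impurity, and it is quadratic in number. The interaction therefore acts as a controlled hopping among impurity modes, conditioned on occupation. On the encoded subspace with one fermion per column (one particle per logical mode), this yields a nontrivial two-qubit gate between wires. Together with the free matchgates this gives universality, via the non-nearest-neighbour matchgate universality of \cite{brod_childs2013computational}.

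\textbf{Step 3: sequence the gates.} For a depth-$S$ circuit, each layer is assigned a distinct energy window (a clock level $r$). The couplings are chosen so that population moves from level $r$ to $r+1$ while the gate of layer $r$ is applied, in a Feynman–Kitaev-like ladder realised by resonant transfers.

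\textbf{Main obstacle.} The hardest part is bounding the off-resonant errors. All Fourier components act simultaneously, and unwanted transitions are suppressed only by $1/\Delta$, where $\Delta$ is the energy gap between levels. Achieving total error $\epsilon$ over $S$ layers requires the crosstalk to be suppressed. We would attempt this with energy levels spaced so that all differences $m-r$ used are distinct (Sidon-like), together with a second-order Magnus bound. The expected outcome is that $O(\log S)$ auxiliary levels per step suffice to keep the accumulated error below constant, giving $M=O(S\log S)$.

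Finally, we restrict to the subspace of one occupied copy per wire. We show this subspace is invariant up to the controlled error and encodes $\Omega(N)$ qubits via the Jordan–Wigner map, which concludes the universality claim.
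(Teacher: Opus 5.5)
There is a genuine gap, and it begins with a misreading of $H_{\rm ind}$. The hopping terms are $\fmj\, c^\dagger_{j,r}c_{j+1,r+m}$, together with the analogous $\gm$ terms between wires $N-2$ and $N$. Each connects wire $j$ at ring index $r$ to wire $j+1$ at ring index $r+m$, and its amplitude does not depend on $r$. So $H_j$ is not a quadratic form on the modes of wire $j$ alone. In the frame rotating with $A=\frac{2\pi}{P}\sum_{j,r} r\,c^\dagger_{j,r}c_{j,r}$, each hop acquires the phase $e^{-i2\pi mt/P}$, which depends only on $m$, not $e^{i(m-r)t}$. This breaks your plan at several points:
\begin{itemize}
\item In Step 1, diagonalising the quadratic part removes exactly the terms that encode the gates. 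Since the quartic term does not commute with it, the free evolution cannot be tracked separately, and the input-state hypothesis does not license this move.
\item In Step 3, a Feynman--Kitaev ladder that applies the gate of layer $r$ on the transition $r\to r+1$ needs $r$-dependent couplings, which $H_{\rm ind}$ does not have. Every hop also changes the wire index, and each fermion carries its own ring index, so there is no shared clock to synchronise a two-qubit gate. The control factor $1-2\sum_p c^\dagger_{N-1,p}c_{N-1,p}$ sums over all $p$ rather than checking a clock value.
\item The on-site energies are equally spaced, so the transition frequencies are maximally degenerate. Sidon-type spacing is therefore unavailable, and your count of $O(\log S)$ levels per step is asserted rather than derived.
\item Restricting to ``one occupied copy per wire'' does not encode a qubit per wire.
\end{itemize}

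The missing idea is that this degeneracy is the mechanism itself, and no rotating-wave or Magnus approximation is needed. The paper writes $e^{-iPH_{\rm ind}}=e^{-iPA}\,\mathcal{T}e^{-i\int_0^P\bar H(t)dt}$ and notes $e^{-iPA}=1$, because $A$ has spectrum in $\frac{2\pi}{P}\mathbb{Z}$. It then Fourier transforms the ring index. Translation invariance in $r$ makes each $N_k=\sum_j c^\dagger_{j,k}c_{j,k}$ conserved. On the sector where only $k=0$ modes are occupied, the generator coincides with the Brod--Childs impurity Hamiltonian $H_{\rm BC}$, with every coefficient function replaced by its $(2M+1)$-term Fourier truncation. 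This is where the product-state hypothesis enters: the input is a product state in the momentum single-particle basis, and the modes $c_{j,0}$ carry the qubits via Jordan--Wigner. Universality then follows from Brod--Childs, and the only errors are classical pulse-approximation errors. Replacing the indicator-gated pulses by periodised Gaussians of width $c$ costs a term controlled by $1-\mathrm{erf}\big(P/(2\sqrt{2}Sc)\big)$. Shifting the integration contour by $a$ shows that the Fourier coefficients decay like $e^{a^2/2c^2}e^{-2\pi Ma/P}$. Choosing $a=P/(2S)$ and $P/(Sc)\approx\sqrt{8\ln(S/\varepsilon)}$ forces $M\gtrsim\frac{S}{\pi}\big[2\ln(S/\varepsilon)+\ln S\big]$, which is where $M=O(S\ln S)$ comes from.
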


We prove that the time-independent impurity model Hamiltonian $H_{\rm ind}$ above is universal. In particular, given any computation $\mathcal{C}$, we provide a construction of a time-independent impurity model that approximates $\mathcal{C}$ with arbitrarily small error. Our proof technique requires an impurity size that grows with the computation depth $S$ as $\mathcal{O}(S\ln{S})$. So far, to the best of our knowledge, there is no known bound for the impurity size for the time-independent impurity model to be universal; though one can argue that with the impurity size linear to the circuit size, one can recover the Fermi-Hubbard model, which is universal. The universality of time evolution under time-independent impurity Hamiltonians with constant impurity size (in the size of the computation) remains open.  

In our construction, we are also able to produce an explicit mapping between a given quantum circuit and a time-independent Hamiltonian, given in \cref{lem:first}.

\textbf{The proof idea}

The proof of \cref{thm:1} follows from a chain of reductions. To prove that Hamiltonian simulation of a time-independent impurity Hamiltonian is universal, we have to show that given a calculation, the time evolution of such model can approximate it to arbitrary accuracy. As a starting point we map the universality result proven in \cite{brod_childs2013computational} to the evolution generated by a time-dependent impurity Hamiltonian, which we call $H_{\rm BC}(t)$. We define the periodic and differentiable extension of $H_{\rm BC}(t)$ with period equal to the computation time $P$. This Hamiltonian corresponds to the one obtained by truncating the Fourier series defining the coefficients of $H_{\rm BC}(t)$ . We call this Hamiltonian $H_{\rm tr}^{[M]}(t)$. Finally we show that in the interaction picture a carefully constructed time-independent Hamiltonian $H_{\rm ind}$ on a larger number of modes generates the same time-dependent evolution as the truncated Hamiltonian $H_{\rm tr}^{[M]}(t)$, once we restrict it to a subspace.

%We can subsequently define the Fourier coefficients of the periodic and differentiable extension of these functions, which are used to construct a time-independent impurity Hamiltonian $H_{\rm ind}$ on a larger Hilbert space. We show that this $H_{\rm ind}$, in the interaction picture and in the momentum basis, generates the same dynamics as the original $H_{\rm BC}$ on a certain subspace. 

Taking into account the errors that arise in each step, we bound the total error between the evolution generated by $H_{\rm ind}$ and $H_{\rm BC}$. Finally, we calculate how much bigger the Hilbert space on which $H_{\rm ind}$ acts must be in order for the total error to be arbitrarily small. In the end, we find that the overhead must be of the order $\mathcal{O}(S\ln{S})$ for a circuit depth $S$. These steps are shown schematically in \cref{fig:flowchart}.

\begin{figure} 
    \centering
    \includegraphics[width=\linewidth]{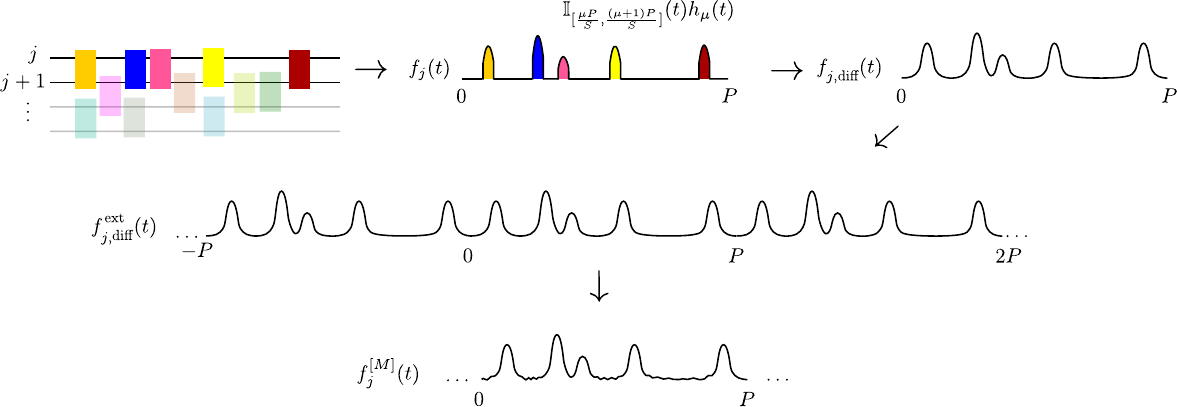}
    \caption{Main ideas of the proof. The circuit that implements universal quantum computation is mapped into a time-dependent evolution over time $P$, with coefficient functions $f_j$  composed of sums of local functions $\mathbbm{1}_{[\frac{\mu P}{S}, \frac{(\mu + 1) P}{S}]}(t)h_\mu (t)$. By discarding the indicator functions we can construct a differentiable extension of $f_j$ that we denote $f_{j,{\rm diff}}$. These coefficient functions can be extended to be periodic over a period $P$. These extensions are denoted as $f_{j,\rm diff}^{\rm ext}$. Finally we truncate the Fourier series representation of these periodic functions. This truncation is denoted as $f_j^{[M]}$. The evolution under a time dependent Hamiltonian with coefficient functions $f_j^{[M]}$ can be implemented through a time-independent Hamiltonian $H_{\rm ind}$ with $\mathcal{O}(M)$ ancillas.}
    % The purple, orange, green curves are the coefficient functions corresponding to the gates $O_1, O_2, O_3$, respectively. These functions can be truncated (see 2. $H_{\rm BC}(t)$), or extended to be differentiable and periodic (see 3. $H_{diff, per}(t))$. The lighter curves (see 4.) are Fourier decompositions of the differentiable and periodic extension of the original coefficient functions. The lighter lines (see 5.) are the Fourier coefficients, which are used as the constant coefficient functions for the time-independent $H_{\rm ind}$. In this figure, the Fourier series is truncated to only $3$ terms (see 4.). }
    \label{fig:flowchart}
\end{figure}

The first reduction is discussed \cref{subsec:red1}. Its periodic extension is discussed in \cref{subsec:extension}. We show the construction of the time-independent Hamiltonian in \cref{subsec:construction_H_indep}. \cref{subsec:final} shows that $H_{\rm ind}$ approximates the time-dependent $H_{\rm BC}$. By adding accumulatively the errors occurred from each calculation step, we explicitly calculate the error bound between the time evolution of these two Hamiltonians and show that the number of modes grows as $S\ln{S}$ with the number of operations $S$, we can achieve arbitrarily small error. In \cref{sec:conclusion} we discuss some implications of this result.

\section{Proof of main theorem}

\subsection{From the circuit model to a time-dependent Hamiltonian}
\label{subsec:red1}

Given a quantum circuit, one can decompose its gates into single-qubit rotation gates and Heisenberg interaction gates (such as $R_{XX}(\theta)$ gates which performs a unitary transformation of the form $\e{-i\frac{\theta}{2}(X\otimes X)}$ ). The CNOT gate, with which the single-qubit rotation gates can also form a universal gates set, can be written with the Heisenberg interaction gates via $\text{CNOT} = \e{-i\frac{\pi}{4}}R_Y^1\left(\frac{-\pi}{2}\right)R_X^1\left(\frac{-\pi}{2}\right)R_X^2\left(\frac{-\pi}{2}\right)R_{XX}^{1, 2}\left(\frac{\pi}{2}\right)R_Y^2\left(\frac{\pi}{2}\right)$, where $R_X^1(\theta)$ is the $X$-rotation gate with angle $\theta$ on qubit $1$, and so on.
% \rs{Mai, can you provide a definition of Heisenberg interaction gates and show that the model CNOT + single qubit gates can be mapped to this?. This is trivial, but good to be pedagogical.}

Hence, WLOG, we can assume that a quantum circuit of $S$ gates can be written as a series of instructions
$\mathcal{C} = \{G_\mu(\theta_\mu)\}_{\mu=0}^{S-1}$ 
where $G_\mu$ is a quantum gate, either a rotation gate or a Heisenberg interaction gate, and $\theta_\mu$ is its parameter ($0\leq \theta_\mu\leq 2\pi$), $\mu$ indexes the order in which these gates are applied. Using the exponential form of such gates,
\begin{equation}\label{eq:circ-H1}
    \mathcal{C} = \{G_\mu(\theta_\mu)\}_{\mu=0}^{S-1} \quad \text{corresponds to} \quad \prod_{\mu = 0}^{S-1} \e{-i\frac{\theta_\mu}{2}{O_\mu}}
\end{equation}
where $O_\mu$ is the corresponding operator to the quantum gate $G_\mu$. For example, a circuit $\{R_X^1(\theta_1),R_Z^1(\theta_2),R_{XX}^{1,2}(\theta_3)\}$ corresponds to $\e{-i\frac{\theta_1}{2}X_1}\e{-i\frac{\theta_2}{2}Z_2}\e{-i\frac{\theta_3}{2}X_1X_2}$. With this in mind, we prove the following lemma
\begin{lemma}\label{lem:first}
Any circuit $\mathcal{C}$ described as in \cref{eq:circ-H1} corresponds to the time evolution with a time-dependent Hamiltonian $H(t) = \sum_{\mu=0}^{S-1} \mathbbm{1}_{[\frac{\mu P}{S}, \frac{(\mu + 1) P}{S}]}(t)h_\mu (t) O_\mu$ over some total evolution time $P$. Here
    $\mathbbm{1}_{[a, b]}(t) = \begin{cases}
    1 \quad t \in [a, b] \\
    0 \quad \text{otherwise}
\end{cases}$
is the indicator function. The functions $h_\mu(t)$ satisfy $\int_{-\infty}^\infty \mathbbm{1}_{[\frac{\mu P}{S}, \frac{(\mu + 1) P}{S}]}h_\mu(t)dt=\frac{\theta_\mu}{2}$, but are otherwise arbitrary. 
\end{lemma}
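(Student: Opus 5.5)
The plan is to compute the time-ordered evolution generated by $H(t)$ directly, one time slot at a time. Partition $[0,P]$ into the $S$ intervals $I_\mu=[\frac{\mu P}{S},\frac{(\mu+1)P}{S}]$. On the interior of $I_\mu$ every indicator function except $\mathbbm{1}_{I_\mu}$ vanishes, so there $H(t)=h_\mu(t)O_\mu$. The endpoints form a measure-zero set, where two indicators overlap. Changing $H$ on a null set does not change the solution of the Schr\"odinger equation $i\partial_t U=H(t)U$, understood in the Carath\'eodory/integral sense. We can therefore ignore this overlap.

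\textbf{Step 1 (evolution on one slot).} On $I_\mu$ the Hamiltonian is a scalar function times a fixed operator, so $[H(t),H(t')]=h_\mu(t)h_\mu(t')[O_\mu,O_\mu]=0$ for all $t,t'\in I_\mu$. The time-ordered exponential then reduces to an ordinary exponential:
\begin{equation*}
U_\mu\coloneq\mathcal{T}\exp\Big(-i\int_{I_\mu}H(t)\,dt\Big)=\exp\Big(-i\Big(\int_{I_\mu}h_\mu(t)\,dt\Big)O_\mu\Big).
\end{equation*}
To verify this, differentiate $V(t)=\exp(-i\int_{\mu P/S}^{t}h_\mu(s)ds\,O_\mu)$. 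Because $O_\mu$ commutes with itself, $i\dot V=h_\mu(t)O_\mu V$, and $V$ solves the Schr\"odinger equation with the correct initial condition. Uniqueness of solutions of linear ODEs, for $h_\mu$ locally integrable, then gives $U_\mu=V$ at the end of the slot.

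\textbf{Step 2 (impose the constraint).} By hypothesis $\int_{-\infty}^{\infty}\mathbbm{1}_{I_\mu}h_\mu\,dt=\theta_\mu/2$. Hence $U_\mu=\e{-i\frac{\theta_\mu}{2}O_\mu}$, which is exactly the gate $G_\mu(\theta_\mu)$.

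\textbf{Step 3 (concatenate).} The propagator has the composition property $U(t_2,t_0)=U(t_2,t_1)U(t_1,t_0)$. Splitting at the points $\mu P/S$ gives
\begin{equation*}
U(P,0)=U_{S-1}\cdots U_1U_0,
\end{equation*}
the ordered product in \cref{eq:circ-H1}, with later gates acting on the left. The time $P>0$ is arbitrary. The $h_\mu$ are arbitrary apart from the integral constraint; for instance one can take the constant $h_\mu=\frac{S\theta_\mu}{2P}$ to show that such functions exist.

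The only delicate point is the handling of the slot boundaries and the regularity of $h_\mu$. I will treat this by assuming $h_\mu$ is integrable on $I_\mu$ and invoking the standard existence and uniqueness theory for the Schr\"odinger equation with an $L^1$-in-time bounded generator. Under that theory, values at finitely many points are irrelevant. Everything else is the elementary commuting-exponential computation above.
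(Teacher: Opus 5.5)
Your proposal is correct and follows the paper's proof: you split $[0,P]$ into the $S$ slots using the group property, observe that on each slot the Hamiltonian is a scalar function times the single operator $O_\mu$, so the time-ordered exponential collapses to $e^{-i(\int h_\mu)O_\mu}=e^{-i\frac{\theta_\mu}{2}O_\mu}$, and then multiply the slot propagators. Your treatment of the measure-zero overlap of the indicators at slot endpoints, and your explicit statement of the product ordering (later gates on the left), fill in details the paper leaves implicit.
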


\begin{proof}
We first revisit properties of the unitary time evolution operator $U(t_i, t_f)$ generated by a given time-dependent  Hamiltonian $H(t)$
$$U(t_i, t_f) = \mathcal{T}\e{-i\int_{t_i}^{t_f}H(s) ds}$$
where the symbol $\mathcal{T}$ represents the time ordering of operators.
The time evolution operator satisfies the group property
\begin{align}\label{eq:group_prop}
    U(t_0, t_2) &= U(t_0, t_1)U(t_1, t_2) \\
    \mathcal{T}\e{-i\int_{t_0}^{t_2}H(s) ds} &= \mathcal{T}\e{-i\int_{t_0}^{t_1}H(s) ds}\mathcal{T}\e{-i\int_{t_1}^{t_2}H(s) ds}
\end{align}
Given the Hamiltonian
%consisting of $S$ operator terms with piecewise continuous coefficient functions on regular intervals of computational time $P$ as $H(t) = \sum_{\mu=0}^{S-1} \mathbbm{1}_{[\frac{\mu P}{S}, \frac{(\mu + 1) P}{S}]}(t)h_\mu (t) O_\mu$
\begin{equation}
     H(t) = \sum_{\mu=0}^{S-1} \mathbbm{1}_{[\frac{\mu P}{S}, \frac{(\mu + 1) P}{S}]}(t)h_\mu (t) O_\mu
 \end{equation}
%where for all $0 \leq \mu \leq S-1$, $\mathbbm{1}_{[\frac{\mu P}{S}, \frac{(\mu + 1) P}{S}]}(t) = \begin{cases}
%    1 \quad t \in [\frac{\mu P}{S}, \frac{(\mu + 1) P}{S}] \\
%    0 \quad \text{otherwise}
%\end{cases}$, $O_\mu$ is an operator and 
%\begin{equation} \label{eq:para-area}
%    \int_{\frac{\mu P}{S}}^{\frac{(\mu + 1) P}{S}} h_\mu(t) \, dt =: \frac{\theta_\mu}{2}\ \quad \text{for} \ 0 \leq \mu \leq S-1
%\end{equation}
The corresponding time evolution generated by $H(t)$ is then
\begin{align}
    \mathcal{T}\e{-i\int_{0}^{P}H(s) ds} &= \prod_{\mu=0}^{S-1} \mathcal{T}\e{-i\int_{\frac{\mu P}{S}}^{\frac{(\mu + 1) P}{S}}H(s) ds} 
    % = \prod_{j=0}^S \mathcal{T}\e{-i\int_{a_j}^{a_{j+1}}O_j ds} 
    % = \prod_{j=0}^S \e{-i(a_{j+1}-a_j)O_j} \\
    = \prod_{\mu=0}^{S-1} \e{-i\int_{\frac{\mu P}{S}}^{\frac{(\mu + 1) P}{S}} h_\mu(t) O_\mu ds} 
    = \prod_{\mu=0}^{S-1} \e{-i\frac{\theta_\mu}{2}O_\mu}
\end{align}
Here, the first equality follows from the property \cref{eq:group_prop} and the second is a consequence of the definition of $H(s)$, where within an interval the time dependence is only through the function $h$. The last equality follows from the normalisation of the functions $h_\mu$. The last expression corresponds to a circuit in \cref{eq:circ-H1}, which consists of rotation gates or Heisenberg interaction gates implemented in series, whose angles are $\theta_\mu$. 
\end{proof}

Studying the computational power of matchgates, Brod and Childs \cite{brod_childs2013computational} have shown that matchgates defined in any connected graph that is not a line nor a cycle can perform universal quantum computation. In particular, considering a line with a loop at the end (see \cref{fig:cylinder} a), they discuss how to generate the universal gate set of then entangling gate CZ and single-qubit gates. They use two qubits adjacent to the qubit corresponding to the 3-degree vertex as ancillas and encode one logical qubit by two physical qubits. They can perform a CZ gates on two logical qubits neighboring the ancillas by applying 11 f-SWAP gates (which are matchgates) on the $4$ physical qubits and the $2$ ancillas. Lastly, they argue that any logical qubit can be moved to any desired location using only f-SWAP gates, concluding that gates of this type on nearest neighboring qubits on such a graph can perform unversal computations.
% \rs{Mai please add the specific gate set that they generate in their paper please}.
Using a Jordan-Wigner transformation \cite{jordanwigner1993}, a Hamiltonian consisting on matchgates between qubits on the edges of \cref{fig:cylinder}a, can be mapped into a fermionic Hamiltonian corresponding to an impurity model of the form
    \begin{equation} \label{eq:H_BC}
        H_{\rm BC}(t) \coloneq \jindex f_j^{\rm BC}(t) (c_j^\dag c_{j+1} + c_{j+1}^\dag c_j) + g^{\rm BC}(t) (1 - 2c_{N-1}^\dag c_{N-1}) (c_{N-2}^\dag c_N + c_N^\dag c_{N-2}).
    \end{equation}
where $t\in[0,P]$ with $P$ the total computation time. Using the indicator function defined above, we can write this Hamiltonian as
    \begin{equation} \label{eq:H2ways}
        H_{\rm BC}(t) \coloneq \sum_{\mu=0}^{S-1} \mathbbm{1}_{[\frac{\mu P}{S}, \frac{(\mu + 1) P}{S}]}(t)h_\mu(t) O_\mu 
    \end{equation} 
The two presentations of $H_{\rm BC}$ \cref{eq:H_BC} and \cref{eq:H2ways} are equivalent under the definitions
\begin{align}
        f_j^{\rm BC}(t)\coloneq &\sum_{\mu: O_\mu = c_j^\dag c_{j+1} + c_{j+1}^\dag c_j} \mathbbm{1}_{[\frac{\mu P}{S}, \frac{(\mu + 1) P}{S}]}(t)h_\mu(t)  \\
        g^{\rm BC}(t)\coloneq &\sum_{\mu: O_\mu =  (1 - 2c_{N-1}^\dag c_{N-1}) (c_{N-2}^\dag c_N + c_N^\dag c_{N-2})} \mathbbm{1}_{[\frac{\mu P}{S}, \frac{(\mu + 1) P}{S}]}(t)h_\mu(t) 
\end{align}

Note that for a given circuit the depth $S$ is fixed. We call the functions $f_j^{\rm BC}(t)$ for $j=1\dots N-2$ and $g^{\rm BC}(t)$ the coefficient functions of $H_{\rm BC}(t)$.

\subsection{Periodic differentiable extension and finite Fourier expansion} \label{subsec:extension}

From \cref{eq:H2ways}, it is clear that the coefficient functions of the Hamiltonian $H_{\rm BC}(t)$ are defined in the interval $t\in[0,P]$. We also see that the coefficient functions might be discontinuous, hence not differentiable, at the end points of each interval, due to the indicator functions $\mathbbm{1}_{[\frac{\mu P}{S}, \frac{(\mu + 1) P}{S}]}(t)$. 
For differentiable functions $h_\mu$, a natural way to construct a differentiable and periodic counterpart of the coefficient functions of $H_{\rm BC}(t)$ is to take the coefficient functions of the new Hamiltonian to be $h_\mu^{\rm ext}(t) \coloneq \sum_{l \in \mathbb{Z}} h_\mu(t+lP)$
\begin{align}
H_{\rm diff}(t) \coloneq \sum_{\mu=0}^{S-1} h_\mu^{\rm ext}(t)\, O_\mu =\sum_{\mu=0}^{S-1}\sum_{l \in \mathbb{Z}} h_\mu(t + lP) \, O_\mu,
\end{align}
which determine the periodic and differentiable extensions
\begin{align}
        f_{j,\rm diff}^{\rm ext}(t)\coloneq & \sum_{l\in \mathbb{Z}}f_{j,\rm diff}(t+mP)=\sum_{\substack{l\in\mathbb{Z}\\\mu: O_\mu = c_j^\dag c_{j+1} + c_{j+1}^\dag c_j}} h_\mu(t+ lP)  \\
        g_{\rm diff}^{\rm ext}(t)\coloneq &\sum_{m\in \mathbb{Z}}g_{\rm diff}(t+mP)=\sum_{\substack{l\in \mathbb{Z} \\\mu: O_\mu =  (1 - 2c_{N-1}^\dag c_{N-1}) (c_{N-2}^\dag c_N + c_N^\dag c_{N-2})}} h_\mu(t+mP) 
\end{align}

%We can extend their definition over the whole real line by introducing the periodic extensions
% \begin{align}
%     f_j^{\rm ext}(t+mP)\coloneq f_j^{\rm BC}(t),\\
%     g^{\rm ext}(t+mP)\coloneq g^{\rm BC}(t),
% \end{align}
% \begin{align}
%     f_j^{\rm ext}(t)\coloneq \sum_{l\in \mathbb{Z}}f_j^{\rm BC}(t+lP),\\
%     g^{\rm ext}(t)\coloneq \sum_{l\in \mathbb{Z}} g^{\rm BC}(t+lP),
% \end{align}
%with $m\in\mathbb{Z}$. 
We see that to make the coefficient functions of the newly defined $H_{\rm diff}(t)$ differentiable, we keep those of $H_{\rm BC}$, but without the indicator functions $\mathbbm{1}_{[\frac{\mu P}{S}, \frac{(\mu + 1) P}{S}]}(t)$. Note that this construction comes with a caveat that $h_\mu$ must decay exponentially fast outside the interval $[0, P]$ for the infinite sums above to converge. This set of extended functions is periodic with period $P$. By the assumptions of \cref{lem:first}, the functions $\{f_j^{\rm ext}\}_{j=1}^{N-2}, g^{\rm ext}$ satisfy the Dirichlet sufficiency conditions \cite{Cornelius_book} in the interval $[0,P]$. This implies that each function can be approximated pointwise by a Fourier series of the form
\begin{align}
    f_{j,\rm diff}^{\rm ext}(t)=\sum_{m=-\infty}^{\infty}e^{i\frac{2\pi }{P}mt}f_{j}^{(m)},\quad g^{\rm ext}_{\rm diff}(t)=\sum_{m=-\infty}^{\infty}e^{i\frac{2\pi }{P}mt}g^{(m)},
\end{align}
where $\fmj$ and $\gm$ are the Fourier coefficients of the functions $f_j^{\rm ext}(t)$ and $g^{\rm ext}(t)$ respectively,
i.e.
\begin{equation} \label{eq:Fourier_coeff}
    \fmj \coloneq \frac{1}{P} \int_0^P \e{i\frac{2\pi}{P}mt} f_{j, \rm diff}^{\rm ext}(t) dt, \quad \gm \coloneq \frac{1}{P} \int_0^P \e{i\frac{2\pi}{P}mt} g^{\rm ext}_{\rm diff}(t) dt.
\end{equation}
The reality of the functions $g^{\rm ext}$ imposes the relation $(g^{(m)})^*=g^{(-m)}$ between its Fourier coefficients, and similarly for all $f_j$.

We define the (time-dependent) Hamiltonian
\begin{equation} \label{eq:H_trunc}
    H_{\rm tr}^{[M]}(t) \coloneq \jindex f_j^{[M]}(t) (c_j^\dag c_{j+1} + c_{j+1}^\dag c_j) + g^{[M]}(t) (1 - 2c_{N-1}^\dag c_{N-1}) (c_{N-2}^\dag c_N + c_N^\dag c_{N-2}),
\end{equation}
 with coefficient functions that are the truncated versions of $f_j^{\rm ext}$ and $g^{\rm ext}$, i.e.
 \begin{align}\label{eq:F_coeff}
        f_j^{[M]}(t)\coloneq \sum_{m=-M}^{M}e^{i\frac{2\pi }{P}mt}f_j^{(m)},\quad g^{[M]}(t)\coloneq \sum_{m=-M}^{M}e^{i\frac{2\pi }{P}mt}g^{(m)}.
 \end{align}
The chain of these extensions is shown in \cref{fig:flowchart}. In the next subsection we explain how a specific time-independent Hamiltonian generates the same evolution as \cref{eq:H_trunc} in a subspace.

\subsection{The time-independent Hamiltonian}
\label{subsec:construction_H_indep}

\begin{figure}
    \centering
    \includegraphics[width=1\linewidth]{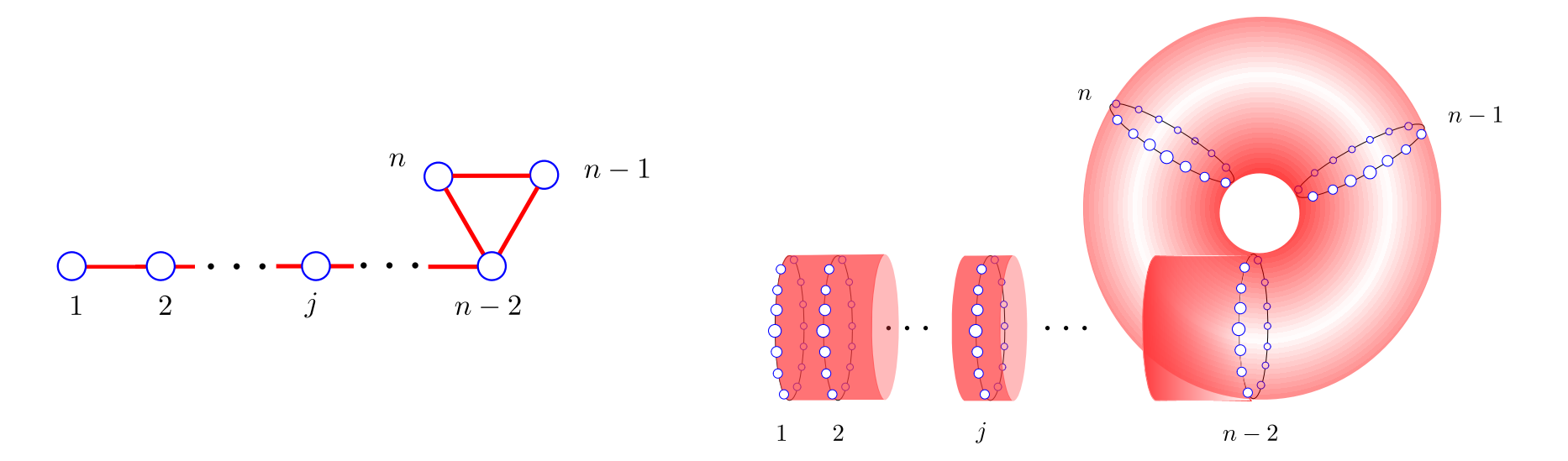}
    \caption{Left: Original interaction graph of the Hamiltonian $H_{\rm BC}$ (\ref{eq:H_BC}). The blue circles represent lattice sites, while the red lines represent interactions in terms of Paulis. Right: The graph representing the time-independent Hamiltonian $H_{\rm ind}$ (\ref{eq:H}) where each site in the fermionic picture becomes a ring of $M$ sites, interacting with their neighbours along the cylinder in a translational-invariant way across the circumference of the cylinder.}
    \label{fig:cylinder}
\end{figure}

 We can now define a time-independent Hamiltonian as follows:
\begin{equation} \label{eq:H}
    H_{\rm ind} \coloneq \sum_{j=1}^{N-2} H_j + \sum_{m,r=-M}^M \left (\gm \cN{r}{m} + {\gm}^* \cNdag{r}{m} \right ) \left ( 1 - {2} \sum_{p=-M}^M c^\dagger_{N-1,p}c_{N-1,p}\right ),
\end{equation}
with $H_j=A_j+B_j$ and
\begin{equation}
    A_j \coloneq \frac{2\pi}{P}\indexM{r} r \cjcj,\quad B_j\coloneq   \sum_{m,r=-M}^M \left ( \fmj \cj{r}{m} + {\fmj}^* \cjdag{r}{m} \right ).
\end{equation}
We assume periodic boundary conditions in the second index of the fermion operators i.e. $c_{j,r+2M+1} = c_{j, r}$ (see \cref{fig:cylinder}). The parameters $\fmj$ and $\gm$ are the Fourier coefficients defined in \cref{eq:Fourier_coeff}.

%We have constructed a time-independent Hamiltonian based on the time-dependent Hamiltonian arisen from a given calculation. Note that $H_{\rm ind}$ acts on $NM$ sites, whereas the original $H_{diff, per}(t)$ acts only on $N$ sites. We now show that indeed one Hamiltonian is an approximation of the other, with Fourier Transform and by going into the interaction picture.

We want to show that $H_{\rm ind}$ generates the same evolution as $H_{\rm tr}^{[M]}(t)$ on a subspace. 
In order to generate a time-dependent Hamiltonian from $H_{\rm ind}$ we go to the interaction picture according to  $A \coloneq \sum_j A_j$. This corresponds to
\begin{align}
    U_P\coloneq e^{-iPH_{\rm ind}}=e^{-i2\pi\sum_{j,r}rc^\dagger_{j,r}c_{j,r}}\mathcal{T}\exp\left(-i\int_0^P\bar{H}(t)dt\right).
\end{align}
 By direct calculation, the expression of $\bar{H}$ is
\begin{align}
    \bar{H}(t) &= \jindex \sum_{m,r=-M}^M \left( \fmj \e{-i\frac{2\pi}{P}mt} \cj{r}{m} + {\fmj}^* \e{i\frac{2\pi}{P}mt} \cjdag{r}{m}\right) \nonumber \\
    &+ \sum_{m,r=-M}^M \left( \gm \e{-i\frac{2\pi}{P}mt} \cN{r}{m} + {\gm}^* \e{i\frac{2\pi}{P}mt} \cNdag{r}{m}\right) \left( 1 - 2\sum_{p=-M}^M\cNN{p}{p} \right)
\end{align}
Performing a Fourier rotation in the second index (denoted by $FT_2$) we have
\begin{align}
FT_2 c_{j,r} FT_2^\dagger = \frac{1}{\sqrt{2M+1}}\sum_{k=-M}^M e^{i\frac{2\pi}{2M+1}kr}c_{j,k},
\end{align}
such that a direct computation leads to
\begin{align}
    &FT_2 \bar{H}(t) FT_2^\dag 
    = \sum_{j=1}^{N-2}\sum_{m,k=-M}^{M}\left(f_{j}^{(m)}e^{-i\frac{2\pi}{P}m(t-\frac{Pk}{2M+1})}c_{j,k}^{\dagger}c_{j+1,k}+f_{j}^{*(m)}e^{i\frac{2\pi}{P}m(t-\frac{Pk}{2M+1})}c_{j+1,k}^{\dagger}c_{j,k}\right)\\
    &+ \sum_{m,k=-M}^M \left (  \gm \e{-i\frac{2\pi}{P}m (t - \frac{kP}{2M+1})} c_{N-2,k}^\dag c_{N, k} +  {\gm}^* \e{-i\frac{2\pi}{P}m (t - \frac{Pk}{2M+1})} c_{N,k}^\dag c_{N-2, k} \right ) \left( 1 - 2 \sum_{p=-M}^Mc_{N-1,p}^\dag c_{N-1, p} \right).  \nonumber
\end{align}

% \begin{figure}[htp]
%     \centering
%     \includegraphics[width=\linewidth]{Images/graph1.png}
%     \caption{How $h_\mu^{\rm BC}(t), h_\mu^{\rm ext}(t)$ and $h_\mu^{[M]}(t)$ are related (and similarly, $f_j^{\rm BC}(t), f_j^{\rm ext}(t)$ and $f_j^{[M]}(t)$; and $g^{\rm BC}(t), g^{\rm ext}(t)$ and $g^{[M]}(t)$). Here, the parameter function $h_\mu^{\rm BC}(t)$ from the original Hamiltonian $H_{\rm BC}$ is taken to be a truncated Gaussian function (see 2.). We see that given $h_\mu^{BC}(t)$, $h_\mu^{\rm ext}(t)$ is the periodic smooth extension of $h_\mu^{BC}(t)$ (see 3.), and $h_\mu^{[M]}(t)$ is the truncation of $h_\mu^{\rm ext}(t)$ (see 4.).}
%     % \rs{I think this is a good idea, although the last step should show the truncation of the Fourier series.}
%     \label{fig:flowchart}
% \end{figure}

 The unitary $\mathcal{U}_P\coloneq  FT_2U_PFT_2^\dag$ then satisfies
\begin{align}\nonumber
     \mathcal{U}_P=   (FT_2 \e{-i2\pi\sum_{j,r} r\cjcj} FT_2^\dag)\mathcal{T}e^{-i\int_0^PFT_2\bar{H}(s)FT_2^\dagger ds}= \mathcal{T}e^{-i\int_0^P\mathcal{H}(s)ds}
\end{align}
where in the last equality we have used that $c^\dagger_{j,r}c_{j,r}$ is an operator with integer eigenvalues, so $e^{i2\pi r c_{j,r}^\dagger,c_{j,r}}=1$, for $r\in\mathbb{Z}$.
The Hamiltonian $\mathcal{H}(t)\coloneq FT_2\bar{H}(t)FT_2^\dagger$ has at least $2M+1$ conserved quantities, one for each transversal momentum sector as $[\mathcal{H}(t), N_k] = 0, \forall k \in \{-M,..., M\}, \text{ where } N_k \coloneq \sum_{j=1}^N c_{j,k}^\dag c_{j,k}$. This means that the evolution generated by $\mathcal{H}(t)$ on a state $\ket{\Psi_0}$ with initial occupations only in the $k = 0$ sector $(N_k = 0 \text{ if } k \neq 0)$ will remain in that subspace.
This implies that on the state $|\Psi_0\rangle$ the Hamiltonian $\mathcal{H}(t)$ only acts nontrivially on the modes $c_{l,0}$, i.e $\mathcal{H}(t)|\Psi_0\rangle =\mathcal{H}_0(t)|\Psi_0\rangle$, where
\begin{align}
    \mathcal{H}_0(t)&\coloneq \sum_{j=1}^{N-2}\sum_{m=-M}^{M}\left(f_{j}^{(m)}e^{-i\frac{2\pi}{P}mt}c_{j,0}^{\dagger}c_{j+1,0}+f_{j}^{*(m)}e^{i\frac{2\pi}{P}mt}c_{j+1,0}^{\dagger}c_{j,0}\right)\\
    &+ \sum_{m=-M}^M \left (\gm \e{-i\frac{2\pi}{P}mt} c_{N-2,0}^\dag c_{N, 0} +  {\gm}^* \e{-i\frac{2\pi}{P}m t} c_{N,0}^\dag c_{N-2, 0} \right ) \left( 1 - 2c_{N-1,0}^\dag c_{N-1, 0} \right).  \nonumber
\end{align}
performing the sums and comparing with \cref{eq:F_coeff}, we have
\begin{align}
    \mathcal{H}_0(t)=\jindex f_j^{[M]}(t) (c_{j,0}^\dag c_{j+1,0} + c_{j+1,0}^\dag c_{j,0}) + g^{[M]}(t) (1 - 2c_{N-1,0}^\dag c_{N-1,0}) (c_{N-2,0}^\dag c_{N,0} + c_{N,0}^\dag c_{N-2,0}).
\end{align}
Note that $\mathcal{H}_0$ is equivalent to $H_{\rm tr}^{[M]}$ (see \cref{eq:H_trunc}) under the substitution $c_{j}\rightarrow c_{j, 0}$. This implies the connection
\begin{align}
    \mathcal{U}_P\ket{\Psi_0} =\mathcal{T}e^{-i\int_0^P{H}_{\rm tr}^{[M]}(s)ds} \ket{\Psi_0}.
\end{align}
We have obtained the time-independent Hamiltonian $H_{\rm ind}$ which under evolution for time $P$ generates the same time evolution as the time-dependent Hamiltonian $H_{\rm tr}^M(t)$ up to a change of basis. By construction $H_{\rm tr}^{[M]}(t)$ is an approximation of $H_{\rm BC}(t)$. As $H_{\rm BC}(t)$ generates universal quantum computation, we just have to prove that the approximation error between the unitary generated by $H_{\rm tr}^{[M]}(t)$ can be made arbitrary small. The rest of the proof quantifies the error between the evolution with $H_{\rm tr}^M(t)$ and $H_{\rm BC}(t)$.

\subsection{$H_{\rm tr}^{[M]}$ as the approximation for $H_{\rm BC}$ with bounded error} \label{subsec:final}

The error between the evolutions generated by $H_{\rm tr}^M(t)$ and $H_{\rm BC}(t)$ can be upper bounded using the integral representation of the error. For two unitaries $U_1(t)=\mathcal{T}\e{-i\int_0^tH_1(s) ds}$ and $U_2(t)=\mathcal{T}\e{-i\int_0^tH_2(s) ds}$ generated by the Hamiltonians $H_1(t)$, $H_2(t)$ respectively, the error function $E \coloneq U_1^\dag(t)U_2(t) - 1$ satisfies the first order differential equation
\begin{align}
    \frac{d}{dt}E(t)=i\left(U_1^\dag (t)(H_1(t) - H_{2}(t))U_{2}(t)\right), 
\end{align}
integrating and using the initial condition $E(0)=1$ this is equivalent to
\begin{align}
     E(t) = i\int_0^t U_1^\dag(t) (H_1(t) - H_{2}(t))U_{2}(t) dt.
\end{align}
Applying the operator norm and the triangle inequality leads to
\begin{align}
    \|E(t)\| &\leq \int_0^t \|U^\dag(t) (H(t) - H_{apx}(t))U_{apx}(t)\| dt \leq \int_0^t \|H(t) - H_{apx}(t) \| dt \label{eq:error}
\end{align}
where the invariance of the operator norm under unitary transformations was used in the last equality.
This means that we can bound the error between evolutions by bounding the difference between their generators. A direct computation of the error between $H_{\rm BC}(t)$ and $H_{\rm tr}^M(t)$ is subtle as the coefficient functions of $H_{\rm BC}(t)$ are piecewise continuous
and approximation with a finite Fourier series can suffer from the Gibbs phenomenon \cite{Vretblad2003-jt}. To avoid this we first approximate the coefficient functions of $H_{\rm BC}(t)$ by continuos functions and then approximate those by the coefficient functions of $H_{\rm tr}^{[M]}(t)$. Denoting the intermediate Hamiltonian by $H_{\rm diff}(t)$ and using the triangle inequality together with \cref{eq:error}
\begin{align}\label{ineq:error}
    E \coloneq \| \mathcal{T}e^{-i\int_0^PH_{\rm BC}(s)ds}-\mathcal{T}e^{-i\int_0^PH_{\rm tr}^{[M]}(s)ds}\|\leq \int_0^P\left(\| H_{\rm BC}(s)-H_{\rm diff}(s)\|+\| H_{\rm diff}(s)-H_{\rm tr}^{[M]}(s)\|\right)ds
\end{align}

We denote the total error by $E$, the difference between $H_{\rm BC}(t)$ and $H_{\rm diff}(t)$ by $E_{\rm Area}$, and the difference between $H_{\rm diff}(t)$ and $H_{\rm tr}^{[M]}(t)$ by $E_{\rm Fourier}$ to have $E = E_{\rm Area} + E_{\rm Fourier}$.

So far, we have not discussed the specific form of the coefficient functions of $H_{\rm BC}$. In this section, we take a closer look to explicitly calculate the error between the evolution generated by $H_{\rm tr}^{[M]}(t)$  and $H_{\rm BC}(t)$ over time $P$. 

% \rs{Mai please take it from here    }

% Firstly, we want to give the explicit construction of the intermediate Hamiltonian $H_{\rm diff}(t)$. Recall that $H_{\rm BC}(t)$ can be written as in \cref{eq:H2ways} (see \cref{fig:flowchart}.2):
% \begin{equation} \tag{\ref{eq:H2ways}}
%     H_{\rm BC}(t) \coloneq \sum_{\mu=0}^{S-1} \mathbbm{1}_{[\frac{\mu P}{S}, \frac{(\mu + 1) P}{S}]}(t)h_\mu(t) O_\mu 
% \end{equation}
% A natural way to construct a differentiable counterpart of the coefficient functions of $H_{\rm BC}(t)$ is to take the coefficient functions of the new Hamiltonian to be $h_\mu^{\rm ext}(t) \coloneq \sum_{l \in \mathbb{Z}} h_\mu(t+lP)$
% \begin{align}
% H_{\rm diff}(t) \coloneq \sum_{\mu=0}^{S-1} h_\mu^{\rm ext}(t)\, O_\mu =\sum_{\mu=0}^{S-1}\sum_{l \in \mathbb{Z}} h_\mu(t + lP) \, O_\mu,
% \end{align}
% which is simply a superposition of $h_\mu(t)$'s, instead of a concatenation (see \cref{fig:flowchart}.3). 

We define the intermediate Hamiltonian $H_{\rm diff}$ to be $P$-periodic as well. This is motivated by the use of Fourier truncation of the coefficient functions in \cref{eq:H}. We use the freedom to choose functions $h_\mu(t)$ that minimize the error between the $H_{\rm BC}(t)$ and $H_{\rm diff}(t)$. A good candidate are Gaussian functions with mean around the middle of the interval where the indicator functions are non-zero, specifically
\begin{equation} \label{eq:hmu}
    h_\mu(t) \coloneq  \frac{X_\mu}{\sqrt{2\pi}c} e^{-\frac{(t - (\mu \frac{P}{S}+\frac{P}{2S}))^2}{2c^2}}=\frac{\theta_\mu}{2\sqrt{2\pi}c\, \erf{\frac{P}{2S\sqrt{2}c}}} e^{-\frac{(t - (\mu \frac{P}{S}+\frac{P}{2S}))^2}{2c^2}}. 
\end{equation}

This implies that the error will be proportional to the exponential decaying tails of each Gaussian. Here $c \in \mathbb{R}$ is the standard deviation which we take to be the same for all the functions. The parameter $X_\mu$ is chosen to satisfy the normalization condition of the functions $h_\mu$. The normalisation factor of $\frac{1}{\sqrt{2\pi}c}$ is included for the ease of calculation later on. 
 Hence, we define the intermediate time-dependent Hamiltonian $H_{\rm diff}(t)$ as follows
\begin{equation} \label{eq:H_diff}
    H_{\rm diff}(t) \coloneq \sum_{\mu=0}^{S-1} h_\mu^{\rm ext}(t)\, O_\mu = \sum_{\mu=0}^{S-1}  \sum_{l \in \mathbb{Z}}\frac{\theta_\mu}{2\, \erf{\frac{P}{2S\sqrt{2}c}}} \frac{1}{\sqrt{2\pi}c} e^{-\frac{(t - (\mu \frac{P}{S}+\frac{P}{2S}) + lP)^2}{2c^2}} O_\mu
\end{equation}
 Notice that there is no longer the indicator functions $\mathbbm{1}_{[\frac{\mu P}{S}, \frac{(\mu + 1) P}{S}]}(t)$ in the newly defined Hamiltonian.

% As the analytical treatment is the same for functions $f_j^{BC}(t)$ and for $g^{BC}(t)$, a convenient way to rewrite this $H_{\rm BC}(t)$ is to rename the index to $\mu$ which indexes the operators in the Hamiltonians in the order in which the gates are applied in the circuit.

% As shown above in Section 2, $H_{\rm BC}(t)$ can be written as 
% \begin{equation} \tag{\ref{eq: H_BC_mu}}
%      H_{\rm BC}(t) =\sum_{\mu = 0}^{S-1} \mathbbm{1}_{[\frac{\mu P}{S}, \frac{(\mu + 1) P}{S}]}(t) X_\mu \frac{1}{\sqrt{2\pi}c} \e{-\frac{(t - (\mu \frac{P}{S}+\frac{P}{2S}))^2}{2c^2}} O_\mu  
% \end{equation}
% and similarly 
% \begin{equation}
%     H_{diff, per}(t) =\sum_{l\in \mathbb{Z}} \sum_{\mu = 0}^{S-1} X_\mu \frac{1}{\sqrt{2\pi}c} \e{-\frac{(t - (\mu \frac{P}{S}+\frac{P}{2S}) + lP)^2}{2c^2}} O_\mu  
% \end{equation}
% % and $H_{trun}(t)$ as \rs{Fill}

% To study how well $H_{trun}(t)$ approximates $H_{\rm BC}(t)$, we will explicitly bound the $L^1$ distance between these two Hamiltonians (see Section \ref{sec:integral_error_appendix}) by calculating their respective $L^1$ norms to the smooth periodic Hamiltonian $H_{diff, per}(t)$ defined above. 

\subsubsection{Bounded error between $H_{\rm BC}$ and $H_{\rm diff}$}
In this subsection, let us take a look at the error between the evolutions generated by $H_{\rm BC}(t)$ and $H_{\rm diff}(t)$, which is bounded by the difference between these Hamiltonians by virtue of \cref{eq:error}. Recall that the indicator functions $\mathbbm{1}_{[\frac{\mu P}{S}, \frac{(\mu + 1) P}{S}]}(t)$ present in the coefficient functions of $H_{\rm BC}(t)$ come from the fact that in a circuit, each operator is turned on only on specific time stamps of duration $\frac{P}{S}$ during computational time $P$. Hence, the difference between the superposition of Gaussian functions and their concatenation, which comes from the series of Gaussian pulses in the circuit, is simply the accumulative area of their tails.

With this in mind, we prove the following lemma:

\begin{lemma} \label{lem:area_error}
    Given a set of $S$ Gaussian functions $\{h_\mu(t)\}_\mu$ described in \cref{eq:hmu} with the same standard deviation $c \in \mathbb{R}$ and different peak heights $X_\mu$
    \begin{equation} \tag{\ref{eq:hmu}}
        h_\mu(t) \coloneq  \frac{X_\mu}{\sqrt{2\pi}c} e^{-\frac{(t - (\mu \frac{P}{S}+\frac{P}{2S}))^2}{2c^2}} 
    \end{equation}
    with equally spaced peaks on $S$ regular segments of the total interval $P$. Denote the periodic, differentiable extension of these functions as $\{h_\mu^{\rm ext}(t)\}_\mu$, where $h_\mu^{\rm ext}(t) = \sum_{l \in \mathbb{Z}} h_\mu(t+lP)$ for all $0\leq \mu \leq S-1$.
    
    The difference in area ($L^1$ normed difference) between the periodic, differentiable extension $\{h_\mu^{\rm ext}(t)\}_\mu$ and its concatenated counterpart $\{\mathbbm{1}_{[\frac{\mu P}{S}, \frac{(\mu +1)P}{S}]}h_\mu(t)\}_\mu$ on the interval $[0,P]$ is bounded
    \begin{equation} \label{ineq:area_error}
        \int_0^P \left |\sum_{\mu=0}^{S-1}h_\mu^{\rm ext}(t) - \sum_{\mu=0}^{S-1}\mathbbm{1}_{[\frac{\mu P}{S}, \frac{(\mu + 1) P}{S}]}(t)h_\mu(t)\right |\, dt \leq \max_\mu \theta_\mu \frac{SP}{2} \left(  \frac{1}{{\rm erf}{\frac{P}{2S\sqrt{2}c}}} -1 \right )
        % \frac{2\sqrt{2}}{\sqrt{\pi}}\max_\mu X_\mu  \, \frac{S^2c}{P } \e{-\frac{P^2}{(2\sqrt{2}Sc)^2}}
    \end{equation}
\end{lemma}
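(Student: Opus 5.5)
The plan is to apply the triangle inequality term by term in $\mu$, then bound each term by the mass of the $\mu$-th periodized Gaussian that lies outside its own slot $I_\mu\coloneq[\frac{\mu P}{S},\frac{(\mu+1)P}{S}]$. Using $\left|\sum_\mu a_\mu\right|\le\sum_\mu|a_\mu|$, the left-hand side of \cref{ineq:area_error} is at most
\begin{equation*}
\sum_{\mu=0}^{S-1}\int_0^P\left|h_\mu^{\rm ext}(t)-\mathbbm{1}_{I_\mu}(t)h_\mu(t)\right|dt .
\end{equation*}
Since every $h_\mu$ is positive, $h_\mu^{\rm ext}(t)\ge h_\mu(t)\ge \mathbbm{1}_{I_\mu}(t)h_\mu(t)$ holds pointwise. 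The absolute values can therefore be dropped, and each summand equals $\int_0^P h_\mu^{\rm ext}\,dt-\int_{I_\mu}h_\mu\,dt$.

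Two elementary Gaussian integrals finish the computation. By periodization,
\begin{equation*}
\int_0^P h_\mu^{\rm ext}(t)\,dt=\sum_{l\in\mathbb{Z}}\int_{lP}^{(l+1)P}h_\mu(t)\,dt=\int_{-\infty}^{\infty}h_\mu(t)\,dt=X_\mu ,
\end{equation*}
where interchanging sum and integral is justified by monotone convergence, since all terms are nonnegative. Because $h_\mu$ is centred at the midpoint of $I_\mu$, which has half-width $\frac{P}{2S}$, we also have $\int_{I_\mu}h_\mu\,dt=X_\mu\,{\rm erf}\!\left(\frac{P}{2S\sqrt{2}c}\right)$. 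This is the normalization condition of \cref{lem:first} with $X_\mu{\rm erf}(\cdot)=\theta_\mu/2$, as in \cref{eq:hmu}. Each summand is then
\begin{equation*}
X_\mu\bigl(1-{\rm erf}(\tfrac{P}{2S\sqrt2 c})\bigr)=\frac{\theta_\mu}{2}\left(\frac{1}{{\rm erf}(\frac{P}{2S\sqrt2 c})}-1\right).
\end{equation*}
Summing over $\mu$ and bounding $\theta_\mu\le\max_\mu\theta_\mu$ gives $\max_\mu\theta_\mu\,\frac{S}{2}\left(\frac{1}{{\rm erf}}-1\right)$.

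This is stronger than the claimed bound by a factor $P$. The claimed inequality follows whenever $P\ge1$, and otherwise I would simply state the sharper estimate. If the intended reading weights the error by the time scale, the extra factor $P$ can be absorbed; I would double-check the convention before finalizing.

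The only genuinely delicate step is the pointwise domination $h_\mu^{\rm ext}\ge\mathbbm{1}_{I_\mu}h_\mu$, which lets us drop the absolute values and removes any need to track overlaps between neighbouring Gaussian tails. It relies on positivity of the $h_\mu$, and hence on $\theta_\mu\ge0$. If some $\theta_\mu$ are negative, the same argument applies with $|h_\mu|$ and $|\theta_\mu|$, since the triangle inequality was already applied term by term. I expect no other obstacle.
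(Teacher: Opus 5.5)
Your argument is correct and is essentially the paper's proof. It uses pointwise domination $h_\mu^{\rm ext}\ge\mathbbm{1}_{I_\mu}h_\mu$, the two areas $X_\mu$ and $X_\mu\,{\rm erf}\bigl(\frac{P}{2S\sqrt{2}c}\bigr)=\theta_\mu/2$, and summation over $\mu$; the paper's proof likewise ends at $\max_\mu\theta_\mu\,\frac{S}{2}\bigl(1/{\rm erf}(\frac{P}{2S\sqrt{2}c})-1\bigr)$, the form later used in the total error.

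The factor $P$ in the printed statement is a typo, since it is dimensionally inconsistent. Your remark about $P<1$ needs correcting: there your bound is not the sharper one, and the printed bound is in fact false. Because every summand is nonnegative, the left-hand side equals $\sum_\mu\frac{\theta_\mu}{2}\bigl(1/{\rm erf}(\frac{P}{2S\sqrt{2}c})-1\bigr)$, which exceeds the printed right-hand side when, e.g., all $\theta_\mu$ are equal.
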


\begin{proof}
    It is easy to see from the definition that for all $0 \leq \mu \leq S-1$, we have$\ h_\mu^{\rm ext}(t) \geq \mathbbm{1}_{[\frac{\mu P}{S}, \frac{(\mu + 1) P}{S}]}(t)h_\mu(t)$ for $t \in [0,P]$. Hence, for all $0 \leq \mu \leq S-1$, the $L^1$ normed difference of $h_\mu^{\rm ext}(t)$ and $\mathbbm{1}_{[\frac{\mu P}{S}, \frac{(\mu + 1) P}{S}]}(t)h_\mu(t)$ is simply the difference between the two areas under these two curves. By construction, the area under the $h_\mu^{\rm ext}(t)$ is 
\begin{align}
    \int_0^P h_\mu^{\rm ext}(t) \, dt = \int_0^P \sum_{l \in \mathbb{Z}}h_\mu(t-lP) \, dt = \int_{-\infty}^{\infty} h_\mu(t) \, dt = \frac{\theta_\mu}{2\, \erf{\frac{P}{2S\sqrt{2}c}}}
\end{align}
The second equality is from a simple change of variable, whereas the last equality uses the explicit definition of $h_\mu(t)$ in \cref{eq:hmu} and the definition of $X_\mu$. On the other hand, the area under coefficient functions of $H_{\rm BC}(t)$, $\mathbbm{1}_{[\frac{\mu P}{S}, \frac{(\mu + 1) P}{S}]}(t)h_\mu(t)$ is $\frac{\theta_\mu}{2}$, required by \cref{lem:first}. Hence, for $0 \leq \mu \leq S-1$, the $L^1$ norm difference is $ \int_0^P \left |h_\mu^{\rm ext}(t) - \mathbbm{1}_{[\frac{\mu P}{S}, \frac{(\mu + 1) P}{S}]}(t)h_\mu(t)\right |\ dt = \frac{\theta_\mu}{2} \left(  \frac{1}{\erf{\frac{P}{2S\sqrt{2}c}}} -1 \right ) $.

Hence, the $L^1$ normed difference between the periodic, differentiable extension $\{h_\mu^{\rm ext}(t)\}_\mu$ and its concatenated counterpart $\{\mathbbm{1}_{[\frac{\mu P}{S}, \frac{(\mu +1)P}{S}]}h_\mu(t)\}_\mu$ on the interval $[0,P]$ is bounded by
\begin{align}
    \int_0^P \left |\sum_{\mu=0}^{S-1}h_\mu^{\rm ext}(t) - \sum_{\mu=0}^{S-1}\mathbbm{1}_{[\frac{\mu P}{S}, \frac{(\mu + 1) P}{S}]}(t)h_\mu(t)\right |\, dt &\leq  \sum_{\mu=0}^{S-1} \int_0^P \left |h_\mu^{\rm ext}(t) - \mathbbm{1}_{[\frac{\mu P}{S}, \frac{(\mu + 1) P}{S}]}(t)h_\mu(t)\right |\, dt \nonumber \\
    &\leq \max_\mu \theta_\mu \frac{S}{2} \left(  \frac{1}{\erf{\frac{P}{2S\sqrt{2}c}}} -1 \right )
    \end{align}

The first inequality is the triangle inequality, and the second inequality comes directly from the difference of the areas under the two curves calculated above.

\end{proof}

We can now bound the difference between $H_{\rm BC}(t)$ and $H_{\rm diff}(t)$. Using the definitions of the two Hamiltonians by \cref{eq:H2ways} and \cref{eq:H_diff}, we have
\begin{align}
    E_{\rm Area} \coloneq \int_0^P \|H_{\rm BC}(t) - H_{\rm diff}(t)\|dt = \int_0^P \left \|\sum_{\mu=0}^{S-1} \mathbbm{1}_{[\frac{\mu P}{S}, \frac{(\mu + 1) P}{S}]}(t)h_\mu(t) O_\mu  - \sum_{\mu=0}^{S-1} h_\mu^{\rm ext}(t) O_\mu \right \| \ dt 
\end{align}
By using Hölder's inequality, and taking the largest parameter $\max_\mu \|O_\mu\|$, the above equality can be bounded by 
\begin{align}
    E_{\rm Area}  &\leq \max_\mu\|O_\mu\|\int_0^P \left |\sum_{\mu=0}^{S-1}h_\mu(t) - \sum_{\mu=0}^{S-1}\mathbbm{1}_{[\frac{\mu P}{S}, \frac{(\mu + 1) P}{S}]}(t)h_\mu(t)\right |\, dt \leq \max_\mu\|O_\mu\| \max_\mu \theta_\mu \frac{SP}{2} \left(  \frac{1}{\erf{\frac{P}{2S\sqrt{2}c}}} -1 \right )
    %\max_\mu\|O_\mu\|\frac{2\sqrt{2}}{\sqrt{\pi}}\max_\mu X_\mu  \, \frac{S^2c}{P } \e{-\frac{P^2}{(2\sqrt{2}Sc)^2}} \\
    % & \leq \max_\mu \| O_\mu\| \max_\mu \theta_\mu \kappa_1 \frac{1}{\erf{\frac{P}{Sc}}} \frac{2\sqrt{2}S^2c}{P} \e{-\frac{P^2}{(2\sqrt{2}Sc)^2}}
\end{align}
The last inequality is a direct result of \cref{lem:area_error}.

\subsubsection{Bounded error between $H_{\rm tr}^{[M]}$ and $H_{\rm diff}$}

% Recall the definition of $H_{\rm tr}^{[M]}$ and $H_{\rm diff}$
% $H_{\rm diff}(t) \coloneq \sum_{\mu=0}^{S-1}h_\mu(t) O_\mu$ 

The coefficient functions of $H_{\rm tr}^{[M]}(t)$ are just the Fourier truncations of those of $H_{\rm BC}(t)$. Similarly to how $H_{\rm BC}(t)$ can be written in two different ways as stated in \cref{eq:H2ways}, it is straightforward to rewrite $H_{\rm tr}^{[M]}(t)$ as follows
\begin{align}
    H_{\rm tr}^{[M]}(t) \coloneq \sum_{\mu = 0}^{S-1} h_\mu^{[M]}(t) O_\mu \quad \text{where} \quad  h_\mu^{[M]}(t)\coloneq \sum_{m=-M}^{M}e^{i\frac{2\pi }{P}mt}h_\mu^{(m)}
\end{align}
with $h_\mu^{(m)}$ being the Fourier coefficient of the $h_\mu(t) $. This is consistent with the identifications of the $f_{j}^{\rm BC}(t), g^{\rm BC}(t)$ with $h_\mu (t)$, and the definitions of $f_j^{(m)}, g^{(m)}$ in \cref{eq:F_coeff}. 

By \cref{eq:error}, to quantify the error between the time evolutions generated by  $H_{\rm tr}^{[M]}$ and $H_{\rm diff}$, we can bound it by the quantity
\begin{align} \label{eq:Hdiff,tr}
    E_{\rm Fourier} \coloneq  \int_0^P \| H_{\rm diff}(t)- H_{\rm tr}^{[M]}(t) \| dt, \quad \text{which, by definition, is  } \quad \int_0^P \left \| \sum_{\mu=0}^{S-1}h_\mu^{\rm ext}(t) O_\mu - \sum_{\mu = 0}^{S-1} h_\mu^{[M]}(t) O_\mu \right \| dt 
\end{align}
Essentially, the error difference of these two generators is caused by the truncation of the Fourier series of the coefficient functions of $H_{\rm diff}(t)$, which we have taken to be Gaussian functions. This truncation error can be made arbitrarily small, precisely because these coefficient functions are differentiable. This is the motivation to define the differentiable Hamiltonian $H_{\rm diff}(t)$ in the first place. Using a known result from complex analysis, we know  that the Fourier partial sum of a periodic analytic function converges exponentially fast in the truncation. To make this work self-contained, we state and prove the result applied to the specific case of Gaussian functions, defined in \cref{eq:hmu}, in the following lemma:
% \begin{lemma}
%     Given a Gaussian function $h_\mu(t)$ described in \cref{eq:hmu} with the same standard deviation $c \in \mathbb{R}$ and different peak heights $X_\mu$. Its periodic extensions is  
%     \begin{align}
%     h_\mu^{\rm ext}(t) &\coloneq \sum_{l\in \mathbb{Z}}h_\mu(t + lP) = \sum_{l\in \mathbb{Z}} X_\mu \frac{1}{\sqrt{2\pi}c} \e{-\frac{(t - (\mu \frac{P}{S}+\frac{P}{2S})+lP)^2}{2c^2}}
%     \end{align}
%     The Fourier partial sum of $h_\mu^{\rm ext}(t)$ converges to $h_\mu^{\rm ext}(t)$ exponentially fast.
% % \begin{equation}
% %    |h_\mu^{per}(t) - h_\mu^{trun}(t)| = \left|\sum_{n=M+1}^\infty\hat{q}_n\e{i\frac{2\pi}{P}nt}\right| \leq \sum_{n=M+1}^\infty  \frac{X_\mu}{P} \e{\frac{a^2}{2c^2}} \e{-2\pi \frac{n}{P}a} \leq  \frac{X_\mu}{P}\e{\frac{a^2}{2c^2}} \int_{M}^\infty \e{-2\pi \frac{s}{P}a} ds = \e{\frac{a^2}{2c^2}} \frac{X_\mu}{2\pi a} \e{-2\pi \frac{M}{P}a} 
% % \end{equation}

% \end{lemma}

\begin{lemma} \label{lem:F_error}
    Given a set of $S$ Gaussian functions $\{h_\mu(t)\}_\mu$ described in \cref{eq:hmu} with the same standard deviation $c \in \mathbb{R}$ and different peak heights $X_\mu$
    \begin{equation} \tag{\ref{eq:hmu}}
        h_\mu(t) \coloneq  \frac{X_\mu}{\sqrt{2\pi}c} e^{-\frac{(t - (\mu \frac{P}{S}+\frac{P}{2S}))^2}{2c^2}} 
    \end{equation}
    with equally spaced peaks on $S$ regular segments of the total interval $P$. Denote the periodic, differentiable extension of these functions as $\{h_\mu^{\rm ext}(t)\}_\mu$, where $h_\mu^{\rm ext}(t) = \sum_{l \in \mathbb{Z}} h_\mu(t+lP)$ for all $0\leq \mu \leq S-1$.

    The Fourier partial sum of $h_\mu^{\rm ext}(t)$ converges to $h_\mu^{\rm ext}(t)$ exponentially fast.
\end{lemma}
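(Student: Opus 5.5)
The plan is to compute the Fourier coefficients of $h_\mu^{\rm ext}$ in closed form and bound the tail of the series directly. No general complex-analysis theorem is needed. The key observation is that the periodization $h_\mu^{\rm ext}(t)=\sum_{l\in\mathbb{Z}}h_\mu(t+lP)$ turns the Fourier coefficient on $[0,P]$ into the Fourier transform of a single Gaussian on $\mathbb{R}$. This is Poisson summation, and it is the same change of variables used in the proof of \cref{lem:area_error}. Writing $t_\mu\coloneq\mu\frac{P}{S}+\frac{P}{2S}$, I would compute
\begin{align}
h_\mu^{(m)}=\frac{1}{P}\int_0^P e^{i\frac{2\pi}{P}mt}h_\mu^{\rm ext}(t)\,dt=\frac{1}{P}\int_{-\infty}^{\infty}e^{i\frac{2\pi}{P}mt}h_\mu(t)\,dt=\frac{X_\mu}{P}\,e^{i\frac{2\pi}{P}mt_\mu}\,e^{-\frac{2\pi^2c^2m^2}{P^2}}.
\end{align}
Exchanging sum and integral is justified by the absolute convergence of the Gaussian periodization, which is where the exponential decay of $h_\mu$ noted in \cref{subsec:extension} is used. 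The last step is the standard Gaussian integral, obtained by completing the square.

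Two consequences follow. First, $|h_\mu^{(m)}|\le\frac{X_\mu}{P}e^{-2\pi^2c^2m^2/P^2}$, so the coefficients are absolutely summable. Since $h_\mu^{\rm ext}$ is smooth and periodic, its Fourier series therefore converges uniformly to it, and not merely pointwise as guaranteed by the Dirichlet conditions. Second, for every $t$ the truncation error satisfies
\begin{align}
\bigl|h_\mu^{\rm ext}(t)-h_\mu^{[M]}(t)\bigr|\le\sum_{|m|>M}|h_\mu^{(m)}|\le\frac{2X_\mu}{P}\sum_{m=M+1}^{\infty}e^{-\frac{2\pi^2c^2m^2}{P^2}}.
\end{align}
To obtain an explicit bound, I would compare the sum with an integral and use $m^2\ge (M+1)m$ for $m\ge M+1$, which gives a geometric series:
\begin{align}
\sum_{m=M+1}^{\infty}e^{-\frac{2\pi^2c^2m^2}{P^2}}\le\frac{e^{-\frac{2\pi^2c^2(M+1)^2}{P^2}}}{1-e^{-\frac{2\pi^2c^2(M+1)}{P^2}}}.
\end{align}
This decays like $e^{-\Theta(M^2)}$, which is faster than exponential in $M$ and so proves the lemma with room to spare. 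Integrating over $[0,P]$ multiplies the bound by $P$, which is the form needed for $E_{\rm Fourier}$ in \cref{eq:Hdiff,tr}. Summing over $\mu$ with $\max_\mu\|O_\mu\|$ then yields a bound of order $S\max_\mu X_\mu\max_\mu\|O_\mu\|\,e^{-2\pi^2c^2M^2/P^2}$ up to the geometric prefactor.

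The only delicate point is the parameter regime, not the analysis. Later, $c$ will be chosen of order $P/S$ so that the tails in \cref{lem:area_error} are small. The exponent is then $\sim M^2/S^2$, and the truncation error is small once $M\gtrsim S\sqrt{\ln(\cdot)}$. The prefactor $1/(1-e^{-2\pi^2c^2(M+1)/P^2})$ must be kept bounded, which requires $c^2M/P^2\gtrsim 1$. This holds in that regime, and otherwise I would use the cruder integral comparison $\sum_{m>M}e^{-am^2}\le\int_M^\infty e^{-as^2}ds\le\frac{e^{-aM^2}}{2aM}$, with $a=2\pi^2c^2/P^2$. I expect the main obstacle to be stating the bound in a form whose constants ($X_\mu$, which contains $1/\mathrm{erf}$, and $c/P$) are tracked cleanly enough to combine with $E_{\rm Area}$ and extract the $M=\mathcal{O}(S\ln S)$ scaling.
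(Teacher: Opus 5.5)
Your proof is correct, but it follows a different route from the paper. The paper never evaluates $h_\mu^{(m)}$. It shifts the contour to $\mathrm{Im}\,z=-a$, cancels the vertical sides by periodicity, and bounds the shifted integral to get $|h_\mu^{(m)}|\le \frac{X_\mu}{P}e^{a^2/(2c^2)}e^{-2\pi m a/P}$ for one fixed strip width $a$ (later $a=P/(2S)$). Comparing the tail with an integral then gives the uniform bound $\frac{X_\mu}{\pi a}e^{a^2/(2c^2)}e^{-2\pi M a/P}$. You instead unfold the periodization and compute the Gaussian transform exactly, obtaining $|h_\mu^{(m)}|=\frac{X_\mu}{P}e^{-2\pi^2c^2m^2/P^2}$. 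The two agree: minimizing the paper's exponent $\frac{a^2}{2c^2}-\frac{2\pi m a}{P}$ over $a$ (at $a=2\pi m c^2/P$) gives exactly $-2\pi^2c^2m^2/P^2$, so your formula is the optimum of the paper's family of bounds. Your route is more elementary and sharper: Gaussian rather than exponential decay in $m$, uniform convergence, and an explicit justification of the sum--integral exchange. It does rely on the closed-form Fourier transform of a Gaussian. The contour argument applies to any pulse analytic in a strip, and fixing $a$ buys a geometric tail that is trivial to sum. (Incidentally, you used the kernel $e^{+i\frac{2\pi}{P}mt}$ from the paper's definition of the coefficients, which does not match the expansion $\sum_m e^{i\frac{2\pi}{P}mt}f^{(m)}$. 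Only $|h_\mu^{(m)}|$ enters, so nothing changes.)

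Your closing remarks on the parameter regime lie outside the lemma, but they need correcting. The area bound requires $\frac{P}{2\sqrt2 Sc}\gtrsim\sqrt{L}$ with $L=\ln(Y\theta S/\varepsilon)$, so $c$ is of order $P/(S\sqrt{L})$, not $P/S$. Your exponent is then at most about $\pi^2M^2/(4S^2L)$, and you need $M\gtrsim SL$, not $S\sqrt{L}$. The sharper decay therefore reproduces, but does not improve, the paper's $\mathcal{O}(S\ln S)$. In the same regime $c^2M/P^2=\Theta(1/S)$. Hence the prefactor $1/(1-e^{-2\pi^2c^2(M+1)/P^2})$ is $\Theta(S)$ rather than bounded, and so is $P^2/(4\pi^2c^2M)$ in your integral comparison. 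This is harmless, since it only adds $\ln S$ to the required exponent, but the claim that the prefactor stays bounded is false.
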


\begin{proof}
    % Given the set of $S$ Gaussian functions $\{h_\mu(t)\}_\mu$ described in \cref{eq:Xmu}, we will show that, for each $0\leq \mu \leq S-1$, the upper bound of the difference between $h_\mu^{\rm ext}(t)$ and its Fourier partial sum is independent of $\mu$. Hence, for the ease of notation, we forgo the subscript $\mu$ when discussing the periodic extensions of these Gaussian functions $h_\mu(t)$. 
    
    By definition, the Fourier coefficient of $h_\mu^{\rm ext}(t)$ is 
\begin{align}
    h_\mu^{(m)} = \frac{1}{P} \int_0^P h_\mu^{\rm ext}(t) \e{-i2\pi \frac{n}{P}t} dt 
\end{align}

% \begin{align}
%     \hat{q}_n = \frac{1}{P} \int_0^P h_\mu^{\rm ext}(t) \e{-i2\pi \frac{n}{P}t} dt 
% \end{align}

% $= \frac{1}{P} \int_0^P \sum_{l\in \mathbb{Z}} X_\mu \frac{1}{\sqrt{2\pi}c} \e{-\frac{(t - (\mu \frac{P}{S}+\frac{P}{2S})+lP)^2}{2c^2}} \e{-i2\pi \frac{n}{P}t} dt$

To prove that the Fourier coefficient $h_\mu^{(m)}$ decay exponentially, we go to the complex plane, where $h_\mu^{(m)}$ is a term in the integral along a rectangle. Note that $h_\mu^{\rm ext}(z)\e{-i2\pi \frac{n}{P}z}$ is holomorphic in the complex plane $\mathbb{C}$. By Cauchy Integral Theorem, for $a \in \mathbb{R}$ (quantifying how much the contour goes into the complex plane), the integral of this function over a closed contour is 0: 
\begin{align}
    &\oint h_\mu^{\rm ext}(z)\e{-i2\pi \frac{n}{P}z} dz = \nonumber \\
    &\int_0^P h_\mu^{\rm ext}(z)\e{-i2\pi \frac{n}{P}z} dz + \int_P^{P-ia} h_\mu^{\rm ext}(z)\e{-i2\pi \frac{n}{P}z} dz + \int_{P-ia} ^ {-ia} h_\mu^{\rm ext}(z)\e{-i2\pi \frac{n}{P}z} dz + \int_{-ia}^0 h_\mu^{\rm ext}(z)\e{-i2\pi \frac{n}{P}z} dz = 0
\end{align}
By the periodic construction of $h_\mu^{\rm ext}(z)$, the second and fourth terms cancel. To study the first term, which is the Fourier coefficient of interest (scaled by a factor of $P$), we evaluate the third term. We do this by direct substitution of $h_\mu^{\rm ext}(z)$:
\begin{align}
    \int_{P-ia} ^ {-ia} h_\mu^{\rm ext}(z)\e{-i2\pi \frac{n}{P}z} dz = X_\mu \frac{1}{\sqrt{2\pi}c} \sum_{l\in \mathbb{Z}} \int_{P-ia}^{-ia} \e{-\frac{(z - (\mu \frac{P}{S}+\frac{P}{2S})+lP)^2}{2c^2}} \e{-i2\pi \frac{n}{P}z} \, dz 
\end{align}
Since the integral is over a horizontal line (with the imaginary part kept constant), we use the change of variable $z = x - ia$ to change this into an integral over a real variable. The sum of the integrals is
\begin{align}
    \sum_{l\in \mathbb{Z}} \int_P^0 \e{-\frac{(x-ia - (\mu \frac{P}{S}+\frac{P}{2S})+lP)^2}{2c^2}} \e{-i2\pi \frac{n}{P}(x-ia)} \, dx = \sum_{l\in \mathbb{Z}} \int_{P-(\mu \frac{P}{S}+\frac{P}{2S}) + lP}^{-(\mu \frac{P}{S}+\frac{P}{2S}) + lP} \e{-\frac{s^2}{2c^2}} \e{\frac{a^2}{2c^2}} \e{\frac{ias}{c^2}} \e{-i2\pi \frac{n}{P}(s + (\mu \frac{P}{S}+\frac{P}{2S}) - lP)} \e{-2\pi \frac{n}{P}a} \, ds
\end{align}
The equality comes from expanding the square to separate the product into the real and complex exponentials and letting the real part of the exponent to be $s$. Once we have separated the real and complex exponentials, it is easy to see that all complex exponential become $1$ when taking the modulus, and the only part left to consider are the real exponentials. Overall, by using the triangle inequality, the modulus of the third term is bounded by
\begin{align}
    \left|\int_{P-ia} ^ {-ia} h_\mu^{\rm ext}(z)\e{-i2\pi \frac{n}{P}z} dz \right| &\leq X_\mu \frac{1}{\sqrt{2\pi}c} \e{\frac{a^2}{2c^2}}  \e{-2\pi \frac{n}{P}a} \sum_{l\in \mathbb{Z}} \int^{P-(\mu \frac{P}{S}+\frac{P}{2S}) + lP}_{-(\mu \frac{P}{S}+\frac{P}{2S}) + lP} \e{-\frac{s^2}{2c^2}}  \, ds  
\end{align}
Note that the integrals in the sum in the last expression have supports which combine to be the whole real line. Hence, the sum of integrals over segments $[-(\mu \frac{P}{S}+\frac{P}{2S}) + lP, P-(\mu \frac{P}{S}+\frac{P}{2S}) + lP]$ is equivalent to one integral of the same function, over the real line. Hence, the upper bound of the third term is
\begin{align}
    \left|\int_{P-ia} ^ {-ia} h_\mu^{\rm ext}(z)\e{-i2\pi \frac{n}{P}z} dz \right| \leq X_\mu \frac{1}{\sqrt{2\pi}c}  \e{\frac{a^2}{2c^2}}  \e{-2\pi \frac{n}{P}a}  \int_{-\infty}^\infty  \e{-\frac{s^2}{2c^2}}  \, ds  
    = X_\mu \e{\frac{a^2}{2c^2}}  \e{-2\pi \frac{n}{P}a} 
\end{align}
The last equality comes from the normalisation of the standard Gaussian function. As mentioned above, the first and last terms of the contour integral add up to 0. Hence, the first term, the Fourier coefficient of interest, and the third term share the same upper bound. 
\begin{align}
    \Rightarrow |h_\mu^{(m)}| &= \frac{1}{P} \left | \int_0^P q(x) \e{-i2\pi \frac{n}{P}x} dx \right | = \frac{1}{P} \left|\int_{P-ia} ^ {-ia} h_\mu^{\rm ext}(z)\e{-i2\pi \frac{n}{P}z} dz \right|\leq \frac{X_\mu}{P} \e{\frac{a^2}{2c^2}}  \e{-2\pi \frac{n}{P}a}
\end{align}

We have bounded the Fourier coefficient $h_\mu^{(m)}$. The last step of the proof is to see how the Fourier partial sum converges. Denote $h_\mu^{[M]}(t)$ the $M$-th partial sum of the Fourier Series of $h_\mu^{\rm ext}(t)$, i.e. $h_\mu^{[M]}(t) = \sum_{m=-
M}^M h_\mu^{(m)} \e{i\frac{2\pi}{P}mt}$. We have
\begin{equation}
   |h_\mu^{\rm ext}(t) - h_\mu^{[M]}(t)| \leq 2\left|\sum_{n=M+1}^\infty h_\mu^{(m)}\e{i\frac{2\pi}{P}nt}\right| \leq 2\sum_{n=M+1}^\infty  \frac{X_\mu}{P} \e{\frac{a^2}{2c^2}} \e{-2\pi \frac{n}{P}a} \leq  2\frac{X_\mu}{P}\e{\frac{a^2}{2c^2}} \int_{M}^\infty \e{-2\pi \frac{s}{P}a} ds = \e{\frac{a^2}{2c^2}} \frac{X_\mu}{\pi a} \e{-2\pi \frac{M}{P}a} 
\end{equation}
The first and second inequalities are the triangle inequality, the third inequality is one between sums and integrals, the last equality is the result of direct computation of the integral. We have proven that the Fourier partial sum $h_\mu^{[M]}(t)$ converges $h_\mu^{\rm ext}(t)$ exponentially fast.

\end{proof}

We now bound the the error between the time evolutions generated by  $H_{\rm tr}^{[M]}$ and $H_{\rm diff}$
% \begin{align} 
%     E_{\rm Fourier} \coloneq  \int_0^P \| H_{\rm diff}(t)- H_{\rm tr}^{[M]}(t) \| dt = \int_0^P \left \| \sum_  {\mu=0}^{S-1}h_\mu^{\rm ext}(t) O_\mu - \sum_{\mu = 0}^{S-1} h_\mu^{[M]}(t) O_\mu \right \| dt 
% \end{align}
By using Hölder's inequality, and taking the largest parameter $\max_\mu \|O_\mu\|$, the above equality can be bounded by 
\begin{align}\label{eq:Fourier_trunc}
    E_{\rm Fourier} \leq \max_\mu\|O_\mu\|\sum_{\mu = 0}^{S-1}\int_0^P \left |h_\mu^{\rm ext}(t) - h_\mu^{[M]}(t)\right |\, dt \leq \max_\mu\|O_\mu\| \max_\mu \theta_\mu \frac{SP}{2\pi a} \frac{1}{ \erf{\frac{P}{2S\sqrt{2}c}}} \e{\frac{a^2}{2c^2}} \e{-2\pi \frac{M}{P}a} 
\end{align}
The last inequality is a direct result of \cref{lem:F_error}.

We have found the explicit expression for the total error occurred when trying to simulate the effect of a quantum circuit $\mathcal{C}$, which can be written as a time-dependent Hamiltonian $H_{\rm {BC}}(t)$, with the time evolution of time-independent Hamiltonian $H_{\rm {indep}}$
\begin{align} \tag{\ref{ineq:error}}
    E&\coloneq\| \mathcal{T}e^{-i\int_0^PH_{\rm BC}(s)ds}-\mathcal{T}e^{-i\int_0^PH_{\rm tr}^{[M]}(s)ds}\| \leq \int_0^P\left(\| H_{\rm BC}(s)-H_{\rm diff}(s)\|+\| H_{\rm diff}(s)-H_{\rm tr}^{[M]}(s)\|\right)ds \\
    &\leq \max_\mu \|O_\mu\|\max_\mu \theta_\mu \frac{S}{2} \frac{1}{\V}\left [ 1 - \V \right ] + \max_\mu\|O_\mu\| \max_\mu \theta_\mu \frac{SP}{2\pi a} \frac{1}{ \erf{\frac{P}{2S\sqrt{2}c}}} \e{\frac{a^2}{2c^2}} \e{-2\pi \frac{M}{P}a} \\
    &= \max_\mu \|O_\mu\|\max_\mu \theta_\mu \frac{S}{2} \left ( \frac{1}{\V}\left [ 1 - \V \right ] + \frac{P}{\pi a} \frac{1}{\V} \e{\frac{a^2}{2c^2}} \e{-2\pi \frac{M}{P}a}\right ) \label{eq:total_error}
    % &= \frac{S\theta}{2} \frac{1}{\V} \left[ 1 - \V + \frac{P}{2\pi a}  e^{\frac{a^2}{2c^2}} e^{- 2\pi \frac{M}{P}a} \right ]
\end{align}
As the quantities $\max_\mu \|O_\mu\|, \, \max_\mu \theta_\mu$ and $S$ are fixed according to the quantum circuit $\mathcal{C}$ that we want to carry out, the error now depends only on the expression in the brackets. We denote $\max_\mu \|O_\mu\|, \, \max_\mu \theta_\mu$ as $Y, \theta$ respectively.

We now prove that the error can be made arbitrarily small by restricting the ratio of $P$, the total computational time, and $c$, the standard deviation of the Gaussian functions, to scale with $S$. Intuitively, smaller $c$ means less overlapping between Gaussian functions, and larger $P$ means larger interval on which they overlap. The quantity $a$ should not be increasing with $S$ due to the exponential present in the error between $H_{\rm diff}(t)$ and $H^{[M]}_{\rm tr}(t)$.

With this intuition in mind, we prove the following lemma:

% given an arbitrarily small error threshold $\varepsilon$, we come to the following ansatz:
\begin{lemma}
    
Given a quantity of error $E$ defined in \cref{eq:total_error}:
\begin{equation} \tag{\ref{eq:total_error}}
    % E = \max_\mu \|O_\mu\|\max_\mu \theta_\mu \frac{S}{2} \left ( \frac{1}{\V}\left [ 1 - \V \right ] + \frac{P}{\pi a} \frac{1}{\V} \e{\frac{a^2}{2c^2}} \e{-2\pi \frac{M}{P}a}\right ) 
    E = Y \, \theta\, \frac{S}{2} \left ( \frac{1}{\V}\left [ 1 - \V \right ] + \frac{P}{\pi a} \frac{1}{\V} \e{\frac{a^2}{2c^2}} \e{-2\pi \frac{M}{P}a}\right ) 
\end{equation}

For any threshold $\varepsilon >0$, by taking the ratio $\frac{P}{S}$ constant, and $\frac{P}{c}$ to scale with $S$ as follows
\begin{align} \label{eq:ansatz}
\begin{cases}
    a \coloneq  \frac{P}{2S}\quad\mbox{constant} \\
    \frac{P}{c} \coloneq  S^{\alpha + 1} \text{ with } \alpha = \frac{1}{\ln{S}} \left ( \ln{2\sqrt{2} + \ln{\sqrt{\ln{\left( \frac{Y\theta S}{\varepsilon} + 1 \right)}}}} \right )
\end{cases}
\end{align}
% \rs{add statement of theorem, reorganize this part}
and with $M = \mathcal{O(S \ln{S})}$, the total error $E$ will be below the threshold $\varepsilon$ i.e. $E \leq \varepsilon$.

% \begin{align} \label{eq:M}
%     M \geq \frac{S}{\pi} \left [ 2\ln{\left( \frac{S\theta}{\varepsilon} + 1 \right)} + \ln{S} + \ln{\frac{2}{\pi}} \right]
% \end{align}

\end{lemma}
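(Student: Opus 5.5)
The plan is to split $E$ into its two summands and make each at most $\varepsilon/2$. The ansatz \cref{eq:ansatz} is designed to turn both the error-function factor and the Gaussian factor $\e{\frac{a^2}{2c^2}}$ into explicit functions of $K \coloneq \frac{Y\theta S}{\varepsilon}$. Then only $M$ remains free, and we solve for it.

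\textbf{Step 1: the error-function argument.} With $\frac{P}{c}=S^{\alpha+1}$, the argument of the error function is
\[
x\coloneq \frac{P}{2S\sqrt{2}c}=\frac{S^{\alpha}}{2\sqrt{2}} .
\]
By the choice of $\alpha$ we have $S^\alpha = e^{\alpha\ln S}=2\sqrt{2}\sqrt{\ln(K+1)}$, so $x^2=\ln(K+1)$. The only analytic input is the standard tail bound $1-\erf{x}\le e^{-x^2}$ for $x\ge 0$, which I would quote or prove quickly by comparison of the complementary error function with $\int_x^\infty \frac{2s}{\sqrt{\pi}x}e^{-s^2}ds$ for $x\gtrsim 1$, or via the Chernoff bound. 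This gives $\erf{x}\ge 1-\frac{1}{K+1}=\frac{K}{K+1}$. Hence
\[
\frac{1}{\erf{x}}\le\frac{K+1}{K}, \qquad \frac{1}{\erf{x}}-1\le \frac1K .
\]
The first term of $E$ is therefore at most $Y\theta\frac S2\cdot\frac{\varepsilon}{Y\theta S}=\frac{\varepsilon}{2}$, uniformly in $S$. This step should be routine.

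\textbf{Step 2: the Fourier term.} With $a=\frac{P}{2S}$:
\begin{itemize}
\item $\frac{a}{c}=\frac{P}{2Sc}=\sqrt2\,x$, so $\e{\frac{a^2}{2c^2}}=e^{x^2}=K+1$;
\item $\frac{P}{\pi a}=\frac{2S}{\pi}$;
\item $2\pi\frac{M}{P}a=\frac{\pi M}{S}$.
\end{itemize}
So the second contribution is bounded by
\[
Y\theta\frac S2\cdot\frac{2S}{\pi}\cdot\frac{K+1}{K}\,(K+1)\,e^{-\pi M/S}.
\]
Requiring this to be $\le\varepsilon/2$ and taking logarithms gives the sufficient condition
\[
M\ \ge\ \frac{S}{\pi}\Big[\,2\ln(K+1)+\ln S+\ln\tfrac{2Y\theta}{\pi\varepsilon}+\ln\tfrac{S}{K}\,\Big].
\]
Since $\frac{S}{K}=\frac{\varepsilon}{Y\theta}$, the last two logarithms combine to the constant $\ln\frac{2}{\pi}$, giving
\[
M\ \ge\ \frac{S}{\pi}\Big[\,2\ln(K+1)+\ln S+\ln\tfrac{2}{\pi}\,\Big].
\]
Choosing $M$ equal to the ceiling of this right-hand side makes $E\le\varepsilon$.

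\textbf{Step 3: scaling.} For fixed $\varepsilon$, the quantity $Y$ is $O(1)$ (each $O_\mu$ is a bounded hopping or interaction term) and $\theta\le 2\pi$. Hence $\ln(K+1)=\ln S+O(1)$, and the bound reads $M\le\frac{3}{\pi}S\ln S+O(S)=\mathcal{O}(S\ln S)$.

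\textbf{Main obstacle.} The delicate point is the bookkeeping of the factors $\frac{1}{\erf{x}}$ and $\e{\frac{a^2}{2c^2}}$. Shrinking $c$ helps the first term but blows up the second through $e^{x^2}$. The ansatz ties both to $K+1$, so the tension costs only a $\ln(K+1)$, i.e.\ a $\ln S$, in $M$. I would check carefully that $a$ fixed at $\frac{P}{2S}$ (not growing with $S$) is exactly what keeps this penalty logarithmic rather than polynomial.

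A minor point is that the erfc bound must hold for all $x\ge0$. When $K$ is small, $x$ may be small; in that case I would fall back on the Chernoff form $1-\erf{x}\le e^{-x^2}$, which is valid for every $x\ge0$.
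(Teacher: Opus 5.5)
Your proposal is correct and follows the paper's proof essentially step for step: you use the same split of $E$ into two contributions of at most $\varepsilon/2$, the same tail bound $1-\mathrm{erf}(x)\le e^{-x^2}$ with $x^2=\ln(K+1)$ forced by the ansatz, and the same sufficient condition $M\ge\frac{S}{\pi}\bigl[2\ln(K+1)+\ln S+\ln\frac{2}{\pi}\bigr]$, and your Step 3 makes explicit the $\mathcal{O}(S\ln S)$ scaling that the paper leaves implicit. One small caveat: the plain Chernoff bound only gives $1-\mathrm{erf}(x)\le 2e^{-x^2}$, so to justify the sharper inequality for all $x\ge 0$ use Craig's formula $1-\mathrm{erf}(x)=\frac{2}{\pi}\int_0^{\pi/2}e^{-x^2/\sin^2\phi}\,d\phi$, or simply quote it as known, as the paper does.
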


\begin{proof}
We will do this by proving that each of the two error contributions will be at most $\frac{\varepsilon}{2}$.
% \begin{align}
%     \frac{P}{2\sqrt{2}Sc} &= \frac{S^\alpha}{2\sqrt{2}} = \frac{S^{\left ( \frac{1}{\ln{S}} \left ( \ln{2\sqrt{2} + \ln{\sqrt{\ln{\left( \frac{Y\theta S}{\varepsilon} + 1 \right)}}}} \right ) \right )}}{2\sqrt{2}} = \sqrt{\ln{\left( \frac{Y\theta S}{\varepsilon} + 1 \right)}}  
% \end{align}

First, using the new ansatz in \cref{eq:ansatz}, we rewrite 
% $\frac{P}{2\sqrt{2}Sc}$ as $\frac{S^\alpha}{2\sqrt{2}}$. 
\begin{equation} \label{eq:alpha1}
    \frac{P}{2\sqrt{2}Sc} = \frac{S^\alpha}{2\sqrt{2}}
\end{equation}
This comes from direct substitution of the definition of $\alpha$ in \cref{eq:ansatz}. Rewriting the error between time evolutions of $H_{\rm BC}(t)$ and of $H_{\rm diff}(t)$ with the new ansatz, we have
\begin{align}
    E_{\rm{Area}} &= \frac{Y \theta S}{2} \frac{1}{\erf{\frac{S^\alpha}{2\sqrt{2}}}}\left [ 1-\erf{\frac{S^\alpha}{2\sqrt{2}}} \right ]
\end{align}

As mentioned above, we want to bound this error $E_{\rm Area}$ by $\frac{\varepsilon}{2}$. Using a known inequality of the error function, where
\begin{equation} \label{ineq:erf}
    1 - \erf{x} \leq \e{-x^2} \quad {\rm for} \quad x > 0
\end{equation} 
and applying that to $\frac{P}{2\sqrt{2}Sc}$, it is easy to see that $E_{\rm Area}$ is upper bounded by $\frac{Y \theta S}{2} \frac{e^{-\frac{S^{2\alpha}}{8}}}{1 - e^{-\frac{S^{2\alpha}}{8}}}\ $. Hence, to make $E_{\rm Area}$ less than $\frac{\varepsilon}{2}$, it is sufficient make the upper bound of $E_{\rm Area}$ less than $\frac{\varepsilon}{2}$. This is done by taking $\alpha$ as in \cref{eq:ansatz} and subsequently substituting  \cref{eq:alpha1} into the upper bound of $E_{\rm Area}$.

% With the definition in \cref{eq:ansatz}, by rearranging, we see that $\alpha$ satisfies the following inequality \begin{align}
%     \frac{S \theta}{2} \frac{e^{-\frac{S^{2\alpha}}{8}}}{1 - e^{-\frac{S^{2\alpha}}{8}}}\ \leq \frac{\varepsilon}{2}
% \end{align} 

% $E_{\rm Area} \leq \frac{\varepsilon}{2}$.
% We now prove that $\alpha$ satisfies a stricter constraint, whihc implies that $E_{\rm{Area}}$ being arbitrarily small. This strict constraint is 
% \begin{align}
%     \frac{S \theta}{2} \frac{e^{-\frac{S^{2\alpha}}{8}}}{1 - e^{-\frac{S^{2\alpha}}{8}}}\ \leq \frac{\varepsilon}{2} \quad \text{i.e.} \quad \frac{S \theta}{2} \frac{e^{-\left (\sqrt{\ln{\left( \frac{S\theta}{\varepsilon} - 1 \right)}}\right )^2}}{1 - e^{-\left (\sqrt{\ln{\left( \frac{S\theta}{\varepsilon} - 1 \right)}}\right )^2}}\ \leq \frac{\varepsilon}{2}
% \end{align}
% which is satisfied by the definition of $\alpha$, easily seen by the direct substitution into Equation 9. Using the inequality \textcolor{red}{number}, where for $x \geq 0, \ \erf{x} \leq 1-e^{-x^2}$
% \textcolor{red}{Change the order to make it sound smoother}, and use $\frac{P}2\sqrt{2}{Sc}$ as the argument, we have shown that $E_{\rm{Area}} \leq \frac{\varepsilon}{2}$.
% \textcolor{red}{See which lines you want to include, re phrase a little bit}

Rewriting the Fourier error $E_{\rm{Fourier}}$ with the new ansatz in \cref{eq:ansatz}, we have 
\begin{align}
    E_{\rm{Fourier}} &= \frac{Y  \theta S^2}{\pi} \frac{1}{\erf{\frac{S^{\alpha}}{2\sqrt{2}}}} e^{\frac{S^{2\alpha}}{8}} e^{-\pi \frac{M}{S}} 
\end{align}
Again, using the inequality for the erf function in \cref{ineq:erf} and taking $\frac{S^{2\alpha}}{8}$ to be the argument, we can bound the Fourier error $E_{\rm{Fourier}}$ by 
\begin{align}
    E_{\rm{Fourier}} &\leq \frac{Y \theta S^2}{\pi} \frac{1}{1 - e^{-\frac{S^{2\alpha}}{8}}} e^{\frac{S^{2\alpha}}{8}} e^{-\pi \frac{M}{S}} = \frac{S \varepsilon}{\pi}  \left ( \frac{Y\theta S}{\varepsilon}+1 \right )^2 e^{-\pi \frac{M}{S}}
\end{align}
The last equality comes from a direct substitution of the expression for $\alpha$ in \cref{eq:ansatz} into the terms $\frac{S^{2\alpha}}{8}$, with which we have $ \frac{S^{2\alpha}}{8} = \ln{\left( \frac{S\theta}{\varepsilon} + 1 \right)} $. We can make the upper bound of $E_{\rm Fourier}$ less than $\frac{\varepsilon}{2}$ by taking $M$ to be 
\begin{align} \label{eq:M}
    M \geq \frac{S}{\pi} \left [ 2\ln{\left( \frac{Y\theta S}{\varepsilon} + 1 \right)} + \ln{S} + \ln{\frac{2}{\pi}} \right]
\end{align}
directly substituting this into the inequality above, we can see that $E_{\rm Fourier}$ is indeed less than $\frac{\varepsilon}{2}$ as required.

% We conclude that with the ansatz given, where 
% \begin{align} \tag{\ref{eq:ansatz}}
%     a \coloneq \frac{P}{2S}\ ; \quad \frac{P}{c} \coloneq S^{\alpha + 1} \text{ with } \alpha = \frac{1}{\ln{S}} \left ( \ln{2\sqrt{2} + \ln{\sqrt{\ln{\left( \frac{S\theta}{\varepsilon} + 1 \right)}}}} \right )
% \end{align}
% and with the number of modes scaling as $\mathcal{O}(S\ln(S))$, the error can be made arbitrarily small.

\end{proof}

This result shows that the time independent evolution of \cref{eq:H}, on a subspace can approximate to arbitrary accuracy the time dependent evolution which is universal.

\section{Discussion}\label{sec:conclusion}
We demonstrate that a time-independent Hamiltonian of an impurity model can perform universal computations, with the size of the impurity scaling as $\mathcal{O}(S\ln{S})$ relative to the depth of the computation $S$. Our proof builds upon the result raised in \cite{brod_childs2013computational}, where it is shown  that matchgates on a graph that is not a line can perform universal computation. Mapping that result to a fermionic system, this implies that an impurity model under a time-dependent Hamiltonian is universal. Through a series of reductions we show that a time independent Hamiltonian can approximate with arbitrary accuracy the evolution generated by the time-dependent one. 

It is still an open question if the same result applies to an impurity of constant size, independent of the depth of the computation. We think it is unlikely to prove such a result using the same technique used in this paper, due to the factor of $\frac{M}{P}$ in the error raised from the Fourier Truncation (see \cref{eq:Fourier_trunc}). This means that the overhead scales at least linearly with the circuit depth. This suggests that it might not be a mere problem of optimisation of different parameters presented in this paper, such as the computational time $P$ or the standard deviation $c$ of Gaussian pulses, but a different approach might be required altogether to study the possibility of lowering the size of the impurity while still maintaining the universality. 

The size $M$ of the impurity, with a bath of size $N$ allows to interpolate between a non-interacting system $(M=0)$, which can be solved with $\mathcal{O}(N^3)$ classical resources, and a fully interacting system ($M=\mathcal{O}(N)$) that can perform universal quantum computation \cite{Childs_2013_QuantumWalk}. In the circuit model, it has been shown \cite{Dias2024classicalsimulation, dias2026} that systems with $\mathcal{O}(1)$ non-Gaussian gates are classically simulable in polynomial time on the size of the bath. This hints that the evolution of time independent impurity models over time $\mathcal{O}(1)$ is not universal. Our work provides the inverse bound, establishing an upper limit on the impurity size necessary to achieve universality.

In summary, these results provide a partial answer to the standing question regarding the computational power of time-independent impurity models. By proving their universality with an impurity that scales with circuit depth, we take a significant step toward characterizing the full computational landscape of this class of models.

% \rs{Mai please review and improve the biblography}

\printbibliography

\end{document}

%% file: admin/preamble2.tex
\usepackage[a4paper, total={7in, 9in}]{geometry}
\usepackage[utf8]{inputenc}
\usepackage{tikz}
\usepackage{background}
\usepackage{caption}
\usepackage{graphicx}
\usepackage{subfig}
\usepackage[normalem]{ulem}
\usepackage{comment}
\usepackage[linewidth=1pt]{mdframed}
\usepackage{longtable}

\backgroundsetup{
   scale=1,
   angle=0,
   opacity=1,
   color=black,
   contents={\begin{tikzpicture}[remember picture, overlay, font=\sffamily]
     \end{tikzpicture}}
 }

\usepackage[
    backend=bibtex,
    style=numeric,
    sorting=none
]{biblatex}
\bibliography{cite}

\usepackage{amsmath}
\usepackage{braket}

\usepackage{listings}
\usepackage{mathtools}
\usepackage[ruled,vlined]{algorithm2e}
\usepackage{amssymb}
\usepackage[ampersand]{easylist}
\usepackage{enumitem}
\usepackage{hyperref}
\hypersetup{
  colorlinks = true, %Colours links instead of ugly boxes
  linkcolor={blue!50!black}, %Colour of internal links
  citecolor=red, %{red!50!black}, %Colour of citations
  urlcolor=blue, %Colour for external hyperlinks
  linktoc=page,
}
\usepackage[capitalise, nameinlink]{cleveref}
\usepackage{amsthm}
\usepackage{thmtools}
\usepackage[
  style=long,
  nolist, 
  nonumberlist,
  nopostdot,
  acronym,
  toc=true, 
]{glossaries}
\usepackage{bm}
\usepackage{booktabs}
\usepackage{multirow}

\usepackage{array}
\usepackage[british]{babel}

\newtheorem{theorem}{Theorem}

\newtheorem{lemma}[theorem]{Lemma}

%% file: admin/macros.tex
\newcommand{\e}[1]{e^{#1}}
\newcommand{\indexM}[1]{\sum_{#1 = 1}^M}

\newcommand{\jindex}{\sum_{j=1}^{N-2}}
\newcommand{\fmj}{f_j^{(m)}}
\newcommand{\gm}{g^{(m)}}
\newcommand{\cjcj}{c_{j, r}^\dag c_{j, r}}
\newcommand{\cj}[2]{c_{j, #1}^\dag c_{j+1, #1 + #2}}
\newcommand{\cjdag}[2]{c_{j+1, #1 + #2}^\dag c_{j, #1}}
\newcommand{\cN}[2]{c_{N-2, #1}^\dag c_{N, #1 + #2}}
\newcommand{\cNdag}[2]{c_{N, #1 + #2}^\dag c_{N-2, #1}}
\newcommand{\cNN}[2]{c_{N-1, #1}^\dag c_{N-1, #2}}

\newcommand{\erf}[1]{\text{erf}\left( #1 \right)}

\newcommand{\V}{\text{erf}\left( \frac{P}{2\sqrt{2}Sc} \right)}